\newtheorem{thm}{Theorem}
\newtheorem{lemma}{Lemma}
\newcommand{\ket}[1]{\left|#1\right\rangle}
\newcommand{\Z}{\mathbb{Z}}
\newcommand{\N}{\mathbb{N}}
\newcommand{\C}{\mathbb{C}}
\newcommand{\R}{\mathbb{R}}
\newcommand{\AI}{A{\rm I }}
\newcommand{\AII}{A{\rm I\!I }}
\newcommand{\AIII}{A{\rm I\!I\!I }}
\newcommand{\CI}{C{\rm I }}
\newcommand{\CII}{C{\rm I\!I }}
\newcommand{\BDI}{BD{\rm I }}
\newcommand{\DIII}{D{\rm I\!I\!I }}
\begin{document}

\title{Homotopy Theory of Strong and Weak Topological Insulators}

\author{R. Kennedy, C. Guggenheim}
\affiliation{Institute for Theoretical Physics, University of Cologne}

\begin{abstract}
We use homotopy theory to extend the notion of strong and weak topological insulators to the non-stable regime (low numbers of occupied/empty energy bands). We show that for strong topological insulators in $d$ spatial dimensions to be ``truly $d$-dimensional'', i.e. not realizable by stacking lower-dimensional insulators, a more restrictive definition of ``strong'' is required. However, this does not exclude weak topological insulators from being ``truly $d$-dimensional'', which we demonstrate by an example. Additionally, we prove some useful technical results, including the homotopy theoretic derivation of the factorization of invariants over the torus into invariants over spheres in the stable regime, as well as the rigorous justification of the parameter space replacements $T^d\to S^d$ and $T^{d_k}\times S^{d_x}\to S^{d_k+d_x}$ used widely in the current literature.
\end{abstract}

\maketitle

\section{Introduction}

In recent years there has been considerable interest in topological phases of band insulators, fuelled by their theoretical prediction \cite{bernevig,fukane,zhang} and subsequent experimental realization \cite{molenkamp,hasan,hasan2} in two and three dimensional time-reversal invariant systems. On the theoretical side the obvious question arose: Under what circumstances (dimensions, symmetries, ...) do topological phases occur? For the case of free fermions, a partial answer was given in the seminal paper \cite{kitaev} using $K$-theory, the result of which is displayed in Table~\ref{periodic} (the "periodic table of topological insulators and superconductors"). It is only a partial answer, since it has a major limitation: It is assumed that the number of occupied as well as empty energy bands is large enough for $K$-theory to apply. In this \textit{stable regime}, it can be shown using $K$-theory~\cite{kitaev} that invariants defined for a torus $T^d$ as momentum space factorize into a product of $d \choose 
l$ independent invariants over $S^l$, where $l$ runs from $0$ to $d$ and Table~\ref{periodic} shows the answer for each $l$. We give an alternative, homotopy theoretic derivation of this result. The factorization of the invariants in the stable regime leads to a natural definition of \textit{strong} topological insulators as those insulators that have a non-trivial invariant in the factor with domain $S^d$, while the remaining insulators are dubbed \textit{weak} topological insulators~\cite{fukanemele}. This distinction, which we refer to as definition (i), becomes problematic in the non-stable regime. In this paper, we propose an alternative definition of strong/weak topological insulators (definition (ii)), which defines a topological insulator to be strong only if it has a non-trivial invariant over $S^d$ \textit{and all other invariants in the product are trivial}, while all remaining insulators are called weak. The additional restriction is necessary in order to guarantee strong topological insulators 
to be ``truly $d$-dimensional''. We demonstrate this by providing an example in the non-stable regime which is strong according to definition (i), but can be realized by stacking lower-dimensional topological insulators. The latter cannot happen for definition (ii).

After formalising the notion of stacking insulators, we further demonstrate through an example in the stable regime that for both definitions, there may be weak topological insulators that cannot be realized through stacking. In other words, there may be also be ``truly $d$-dimensional'' weak topological insulators.

In order for definition (ii) to be well-defined, we prove that distinct phases over a sphere as momentum space always remain distinct over the torus with the same dimension as momentum space. By a straightforward generalization, we show that they also always remain distinct over any product of sphere and torus with the same (total) dimension, which provides a rigorous justification for the replacement $S^{d_x}\times T^{d_k}\to S^{d_x+d_k}$ in~\cite{teokane}, where invariants for topological insulators in $d_k$ dimensions with a defect of codimension $d_x+1$ are calculated.

In this paper, we use the natural notion of homotopy (also known as adiabatic or continuous deformation) as an equivalence relation between insulator ground states. This generalizes two definitions of topological insulators in the current literature: The first one is based on $K$-theory and the generalization is the extension to the non-stable regime. The second definition, as adopted in \cite{hughes}, defines an insulator to have non-trivial topology if there is no adiabatic deformation to the atomic limit, where fermions are localized at lattice points. While this second definition already uses the notion of homotopy (= adiabatic deformation), it only distinguishes non-trivial from trivial and is generalized here by additionally distinguishing between different non-trivial insulators.

\section{Setting and statement of results}

A translation invariant free fermion ground state of an insulator in one of the two ``complex'' symmetry classes (upper two rows in Table~\ref{periodic}) is described by a map
\begin{align}
\psi:T^d\to C_s,
\end{align}
where $T^d$ is the Brillouin zone torus and $C_s$ is either a Grassmannian Gr$_m(\mathbb{C}^n)$ for even~$s$ (class A) describing an $n$-band model with $m$ filled bands, or a unitary group $U_n$ for odd $s$ describing a $2n$-band model with $n$ filled bands in the chiral class \AIII. A ground state in one of the eight ``real'' classes (eight lower rows in Table~\ref{periodic}) satisfies the additional requirement
\begin{align}
\tau_s(\psi(k))=\psi(-k),\label{equivariance}
\end{align}
restricting the image at time reversal invariant momenta to the space $R_s$ (see Table~\ref{periodic}). An equivalent formulation of condition~\eqref{equivariance} is to say that $\psi$ is \textit{equivariant} with respect to the $\Z_2$-actions given by $k\mapsto-k$ on $T^d$ and $\tau_s$ on $C_s$. A detailed description of all $C_s$, $R_s$ and $\tau_s$ can be found in~\cite{kz}.

\begin{table}
\centering
\begin{ruledtabular}
\begin{tabular}{cc|c|c|cccccccc}
\multicolumn{2}{c|}{symmetry} & target & fixed point set & \multicolumn{4}{c}{$[S^d,C_s]^*_{\Z_2}$} \\
$s$& label & $C_s$ & $R_s$ &
$0$ & $1$ & $2$ & $3$\\
\hline
even & A 	& $\cup_{m=0}^n\text{Gr}_m(\C^n)$ 		& $\cup_{m=0}^n\text{Gr}_m(\C^n)$ 			& $\Z_{n+1}$ & $0$ & $\Z$ & $0$\\
odd & \AIII 	& $U_n$ 						& $U_n$							& $0$ & $\Z$ & $0$ & $\Z$\\
\hline
0 & D 		& $\cup_{m=0}^{2n}\text{Gr}_m(\C^{2n})$	& $O_{2n}/U_{n}$ 						& $\Z_2$ & $\Z_2$ & $\Z$ & $0$\\
1 & \DIII 	& $U_{2n}$ 						& $U_{2n}/Sp_{2n}$ 					& $0$ & $\Z_2$ & $\Z_2$ & $\Z$ \\
2 & \AII 	& $\cup_{m=0}^n\text{Gr}_{2m}(\C^{2n})$ 	& $\cup_{m=0}^n\text{Gr}_m(\mathbb{H}^n)$	& $\Z_{n+1}$ & $0$ & $\Z_2$ & $\Z_2$\\
3 & \CII 	& $U_{2n}$ 						&  $Sp_{2n}$ 						& $0$ & $\Z$ & $0$ & $\Z_2$\\
4 & C 		& $\cup_{m=0}^{2n}\text{Gr}_m(\C^{2n})$	& $Sp_{2n}/U_n$ 						& $0$ & $0$ & $\Z$ & $0$\\
5 & \CI 	& $U_n$ 						& $U_n/O_n$						& $0$ & $0$ & $0$ & $\Z$\\
6 & \AI 	& $\cup_{m=0}^n\text{Gr}_m(\C^n)$ 		& $\cup_{m=0}^n\text{Gr}_m(\R^n)$ 			& $\Z_{n+1}$ & $0$ & $0$ & $0$\\
7 & \BDI 	& $U_n$ 						& $O_{n}$ 							& $\Z_2$ & $\Z$ & $0$ & $0$
\end{tabular}
\end{ruledtabular}
\caption{This table (the "periodic table of topological insulators and superconductors"~\cite{kitaev}) lists the sets $[S^d,C_s]^*_{\Z_2}$ for dimensions $0\le d\le3$ for the eight real and two complex symmetry classes indexed by $s$ mod $8$. In the complex case, the involution on $S^d$ and $C_s$ is trivial, while in the real case, it is non-trivial with fixed point sets $S^0$ and $R_s$ respectively.}
\label{periodic}
\end{table}

A topological phase in this setting is an equivalence class of ground state maps $\psi$, denoted $[\psi]$. In this paper, we use the equivalence relation of being homotopic, i.e. having a continuous interpolation respecting the additional equivariance condition~\eqref{equivariance} in the real cases.  The set of all topological phases in $d$ dimensions will be denoted $[T^d,C_s]$ for the complex classes and $[T^d,C_s]_{\Z_2}$ for the real ones. We will mainly use the latter since it includes the former as the special case with trivial $\Z_2$-action on $T^d$ and $C_s$, which is automatically equivariant.

As outlined in the introduction, this definition of a topological phase refines two definitions given in the current literature. One of them views an insulator ground state as a vector bundle of occupied eigenstates (of a Hamiltonian) over the Brillouin zone and defines equivalence classes through the notion of isomorphism of vector bundles~\cite{denittis1,denittis2}. These bundles may be constructed using pullback under $\psi$ (the \textit{classifying map}) of the tautological bundle over the Grassmannian. If there is no isomorphism between the vector bundles associated to two ground state maps $\psi_0$ and $\psi_1$, then there is no homotopy between them, i.e. $[\psi_0]\ne[\psi_1]$, but the converse is only true for a large number of empty bands (large~$n-m$). The equivalence relation of isomorphisms of vector bundles is relaxed further when going to $K$-theory, where the equivalence relation called \textit{stable equivalence} only requires isomorphism of vector bundles up to direct sums with arbitrary trivial bundles~\cite{kitaev,stone}. Here, two vector bundles that represent different stable equivalence classes in $K$-theory are in particular not isomorphic, but the converse is only true for large bundle dimensions (large number $m$ of occupied bands). Therefore, the notion of homotopy classes refines that of isomorphism classes of vector bundles, which in turn refines that of stable equivalence in $K$-theory. Note that the intermediate step of considering isomorphism classes of vector bundles is limited to classes A, \AII and \AI where we have two parameters $m$ and $n$ (see Table~\ref{periodic}).

The other definition of topological phases~\cite{hughes} uses the atomic limit as a reference ground state, which corresponds to a constant map $\psi$. It defines an insulator ground state to have non-trivial topology if there exists no adiabatic deformation to the atomic limit. This translates to no homotopy existing to the constant map. Using homotopy as an equivalence relation also refines this definition since it additionally distinguishes between different non-trivial states.

We note here that the notion of homotopy is natural in that it is the direct mathematical formalization of the physical concept of "adiabatically connecting": Two ground states can be adiabatically connected \textit{if and only if} there is a homotopy between them. This implies that two insulators in different topological phases cannot be adiabatically connected without a topological phase transition (closing of the energy gap). While, therefore, the situation is clear for translation invariant, infinitely extended systems, the challenging task remains of establishing the bulk-boundary correspondence, which states that the boundary of a topological insulator is gapless. The bulk-boundary correspondence has been addressed primarily in the stable regime~\cite{gurarie,graf}, but numerical results indicate that it also holds in the non-stable regime~\cite{hopf}. However, spatial symmetries (symmetries that do not commute with translations) may violate it in either regime (see~\cite{hughes}, section 
IV~B~4).

Based on the refined definition of equivalence, we propose to define strong topological insulators as the non-trivial classes in the subset
\begin{align}
[S^d,Y]_{\Z_2}\subset[T^d,Y]_{\Z_2},\label{inclusion1}
\end{align}
while weak topological insulators are defined as the complement. We will refer to this new definition as definition~(ii). In appendix B, we prove that the above inclusion is indeed well-defined. Using similar arguments, we show that there is also an inclusion
\begin{align}
[S^{d_x+d_k},Y]_{\Z_2}\subset[S^{d_x}\times T^{d_k},Y]_{\Z_2},\label{inclusion2}
\end{align}
a useful technical result for determining topological phases in the presence of (single) defects with codimension $d_x+1$, which is used (without proof for the non-stable regime) in~\cite{teokane}.

The result~\eqref{inclusion1} and the corresponding definition (ii) is to be contrasted with the usual definition (i) (see~\cite{fukanemele}) in the stable regime. In terms of homotopy theory, the latter is based on the decomposition of $[T^d,C_s]_{\Z_2}$ into a product containing ${d \choose l}$ factors of each set $[S^l,C_s]^*_{\Z_2}$ with $l=0,\dots,d$:
\begin{align}
[T^d,C_s]_{\Z_2}\simeq\prod_{l=0}^d\Big([S^l,C_s]^*_{\Z_2}\Big)^{d \choose l},\label{decomposition}
\end{align}
where $[X,Y]^*_{\Z_2}$ denotes the set of homotopy classes of maps $X\to Y$ mapping a base point $x_0$ in the fixed point set $X^{\Z_2}$ to a base point $y_0\in Y^{\Z_2}$ with all homotopies respecting this property. This formula holds for all symmetry classes $s$ with a slight modification for classes A, \AI and \AII, where we replace $C_s$ by its connected components $(C_s)_0$ containing the base point (i.e. we fix a number of occupied bands) and omit the factor with $l=0$ on the right hand side. The decomposition above can be obtained via $K$-theory~\cite{kitaev}, but we give an independent homotopy theoretic proof in appendix A.

As an example, consider three-dimensional insulators in class \AII~\cite{fukanemele}. In that case,
\begin{align}
[T^3,(C_2)_0]_{\Z_2}\simeq\Z_2\times(\Z_2\times\Z_2\times\Z_2),
\end{align}
since $[S^3,C_2]^*_{\Z_2}=[S^2,C_2]^*_{\Z_2}=\Z_2$ and $[S^1,C_2]^*_{\Z_2}=0$ (we use ``0'' here to denote the set with only one element). The (original) definition (i) of a strong topological insulator~\cite{fukanemele} is the subset of $[T^d,C_s]_{\Z_2}$ which is non-trivial in the factor $[S^d,C_s]^*_{\Z_2}$. In the example, this would give a set of \textit{eight} strong topological insulators, whereas our definition (ii) only gives \textit{one} strong topological insulator. In the language of~\cite{fukanemele}, there is a strong invariant $\nu_0$ and three weak invariants $\nu_i$, $i=1,2,3$, all taking values $0$ or $1$ so that every phase is described by the tuple $(\nu_0;\nu_1,\nu_2,\nu_3)$. A strong insulator in~\cite{fukanemele} is any element of the form $(1;\nu_1,\nu_2,\nu_3)$, giving 8 possibilities, while with our definition only the element $(1;0,0,0)$ is strong. In both definitions, \textit{weak} topological insulators are the complementary set respectively.

An example which shows that the factorization in~\eqref{decomposition} may not hold in the non-stable regime is given by the Hopf insulator~\cite{hopf}. This is a three-dimensional insulator with one occupied and one empty band in class A. The topological phases of this model have been computed in~\cite{hopfcalc}:
\begin{align}
\begin{split}
[T^3,\text{Gr}_1(\mathbb{C}^2)]=\{(&n_0;n_1,n_2,n_3)\mid n_1,n_2,n_3\in\Z;\\
&n_0\in\Z\text{ for }n_1=n_2=n_3=0\text{ and}\\
&n_0\in\Z_{2\cdot\gcd(n_1,n_2,n_3)}\text{ otherwise}\}\label{hopf}
\end{split}
\end{align}
This example demonstrates that the invariants may not be independent of each other as in the stable regime. Only for $n_1=n_2=n_3=0$ do we find $n_0\in\Z=[S^3,\text{Gr}_1(\mathbb{C}^2)]\subset[T^3,\text{Gr}_1(\mathbb{C}^2)]$ and definition (ii) only considers the non-trivial elements in this subset to be strong topological insulators.

Another example of a strong invariant "breaking down" when lower dimensional invariants become non-trivial will be presented in the next section. Crucially, this example will demonstrate that strong topological insulators according to definition (i) may be realized by stacking lower-dimensional systems. Hence, only definition (ii) has the property that all strong topological insulators in $d$ dimensions are "truly $d$-dimensional" (we use this phrase synonymously for the property of not being realizable by stacking lower-dimensional systems).

\section{Stacked insulators}

Before we introduce the example that motivates the use of definition (ii) over (i), we introduce a formalism to deal with the possibility of stacking insulators into higher dimensions. 

Let the translation-invariant Hamiltonian $\hat H$ of an $n$-band model act on the Hilbert space \mbox{$l^2(\Z^d)\otimes\mathbb{C}^n$} of a $d$-dimensional lattice. In a basis $\{\ket{x,\alpha}\}$, with $x\in\Z^d$ and $\alpha=1,\dots,n$,
\begin{align}
\hat H\ket{x,\alpha}:=\sum_{x',\beta}h_{\alpha\beta}(x')\ket{x+x',\beta},\label{hamiltonian}
\end{align}
where $h_{\alpha\beta}(x')=\overline{h_{\beta\alpha}(-x')}$ to ensure hermiticity. 

In the following, we will fix the given basis and work only with the matrix $h(x')$. After a Fourier transform, the Bloch Hamiltonian is given by
\begin{align}
H(k)=\sum_{x\in\Z^d} h(x)e^{i\langle k,x\rangle},
\end{align}
for $k\in T^d$.

We now view $\Z^d$ as being embedded into some bigger lattice $\Z^D$ with $D>d$. In \eqref{hamiltonian}, a canonical embedding is given by letting $x,x'\in\Z^D$ and setting $h_{\alpha\beta}(x')=0$ whenever $x'_i\ne 0$ for $i=d+1,\dots,D$. Physically, this means no hopping into the new $D-d$ directions or, equivalently, stacking of the $d$-dimensional system into these directions. 

To generalize the stacking direction, we introduce an invertible, integer $D$-by-$D$ matrix $A\in\text{GL}_D(\Z)$ and define the stacked Hamiltonian to be given by the replacement $h_{\alpha\beta}(x')\mapsto h_{\alpha\beta}(A^{-1}x')$, corresponding to changing the hopping from the $x'$-direction to the $Ax'$-direction.

Defining the projection $P:T^D\to T^d$ by $P(k_1,\dots,k_D):=(k_1,\dots,k_d)$, the Bloch Hamiltonian of the stacked system can be expressed by the lower-dimensional Bloch Hamiltonian:
\begin{align}
H_\text{stack}(k)&=\sum_{x\in\Z^D} h(A^{-1}x)e^{i\langle k,x\rangle}\nonumber\\
&=\sum_{x\in\Z^D} h(x)e^{i\langle k,Ax\rangle}\nonumber\\
&=\sum_{x\in\Z^D} h(x)e^{i\langle A^T k,x\rangle}\nonumber\\
&=\sum_{x\in\Z^d} h(x)e^{i\langle PA^T k,x\rangle}\nonumber\\
&=H(PA^T k).
\end{align}
The change in $k$-dependence descends to the level of ground state maps. Therefore, given a ground state $\psi:T^d\to C_s$, stacking it in $D$ dimensions according to the matrix $A$ yields a map
\begin{align}
\psi_\text{stack}(k)=\psi(PA^T k).
\end{align}

\subsection{Strong (def. (i)) but stackable}

We are now in a position to analyse another example which will illustrate three points:
\begin{itemize}
\item In general, only non-trivial elements in $[S^d,Y]_{\Z_2}\subset[T^d,Y]_{\Z_2}$ cannot be realized by stacking lower-dimensional systems
\item Strong invariants can "break down" in the non-stable regime
\item Base points make a difference in the non-stable regime
\end{itemize}
The model we consider is a two-dimensional system with three bands, one of which is occupied. We assume that the combination $\mathcal{T}\circ\mathcal{I}=\mathcal{I}\circ\mathcal{T}$ of time-reversal $\mathcal{T}$ (with $\mathcal{T}^2=1$) and inversion $\mathcal{I}$ (with $\mathcal{I}^2=1$) is a symmetry, but individually, both $T$ and $I$ symmetries are broken. Since this symmetry fixes all momenta, the image of any ground state map lies within the fixed point set $\text{Gr}_1(\mathbb{R}^3)\subset\text{Gr}_1(\mathbb{C}^3)$. Therefore, we can consider non-equivariant ground state maps
\begin{align}\label{gsnematics}
\psi:T^2\to\text{Gr}_1(\mathbb{R}^3),
\end{align}
which have been studied in the context of nematics in~\cite{jaenich,bechtluft,chen}, where the torus plays the role of a measuring surface around closed defect lines.

The topological phases in this setting are given by
\begin{align}
[T^2,\text{Gr}_1(\mathbb{R}^3)]&=\{(n_0;n_1,n_2)\mid n_1,n_2\in\Z_2;\nonumber\\
&\qquad n_0\in\mathbb{N}\text{ for }n_0=n_1=0\text{ and }\nonumber\\
&\qquad n_0\in\Z_2\text{ otherwise}\}.\label{nematics}
\end{align}
The strong invariant $n_0\in\mathbb{N}=[S^2,\text{Gr}_1(\mathbb{R}^3)]$ comes from the mapping degree of maps $S^2\to S^2$ composed with the projection $S^2\to S^2/\Z_2=\text{Gr}_1(\mathbb{R}^3)$. In the language of physics, $n_0$ represents the skyrmion charge (see Fig.~\ref{skyrmion}). Note that preserving base points gives $[S^2,\text{Gr}_1(\mathbb{R}^3)]^*=\Z$, on which the fundamental group $[S^1,\text{Gr}_1(\mathbb{R}^3)]^*$ acts via multiplication by $-1$. Hence, after identification to obtain the unbased classes (see appendix B), $\Z$ changes to $\mathbb{N}$ (skyrmions of charge $n_0$ are homotopic to ones with charge $-n_0$).

Again, the strong invariants $\mathbb{N}$ "break down" to $\Z_2$ in the presence of non-trivial lower-dimensional invariants $n_1,n_2\in\Z_2=[S^1,\text{Gr}_1(\mathbb{R}^3)]$, which are non-trivial if they involve a $180°$ rotation of lines along the loop, a configuration known as a Moebius strip (see Fig.~\ref{01} for an example which is canonically embedded into two dimensions).

In this model, \textit{all} classes except those of the form $(n_0;0,0)$ with $n_0\ne0$, which correspond to non-trivial elements in $[S^2,\text{Gr}_1(\mathbb{R}^3)]\subset[T^2,\text{Gr}_1(\mathbb{R}^3)]$, have representatives that are stacked versions of one-dimensional systems~\cite{jaenich,bechtluft,chen}. This gives a total of \textit{seven} stackable classes (four with $n_0=0$ and three with $n_0=1$). This is remarkable, since naively, the $\Z_2$ classification in one dimension would suggest only \textit{four} classes ($\Z_2$ in both linearly independent directions and $n_0=0$). Hence, in general, only definition (ii) can guarantee that no stackable strong topological insulators can occur. Notice that the full restriction of strong topological insulators to $[S^2,\text{Gr}_1(\mathbb{R}^3)]\subset[T^2,\text{Gr}_1(\mathbb{R}^3)]$ is necessary in order to avoid strong topological insulators that may be stacked, excluding the possibility of a compromise between definitions (i) and (ii).

\begin{figure}
\centering
\subfloat[]{\label{skyrmion}\includegraphics[width=0.2\textwidth]{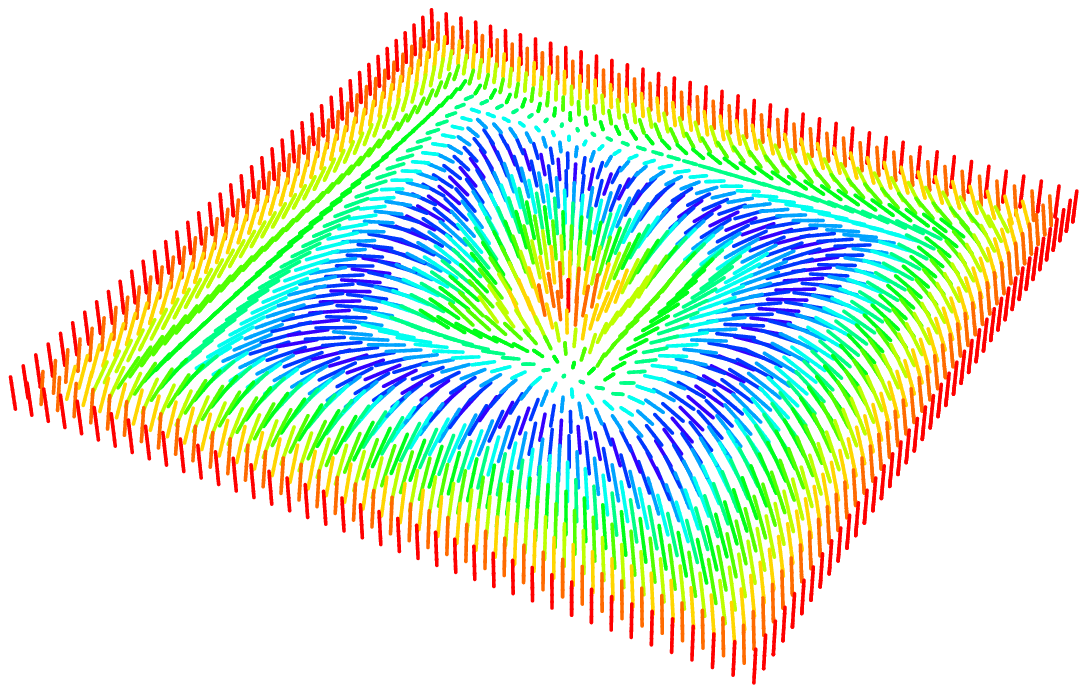}}\quad
\subfloat[]{\label{01}\includegraphics[width=0.2\textwidth]{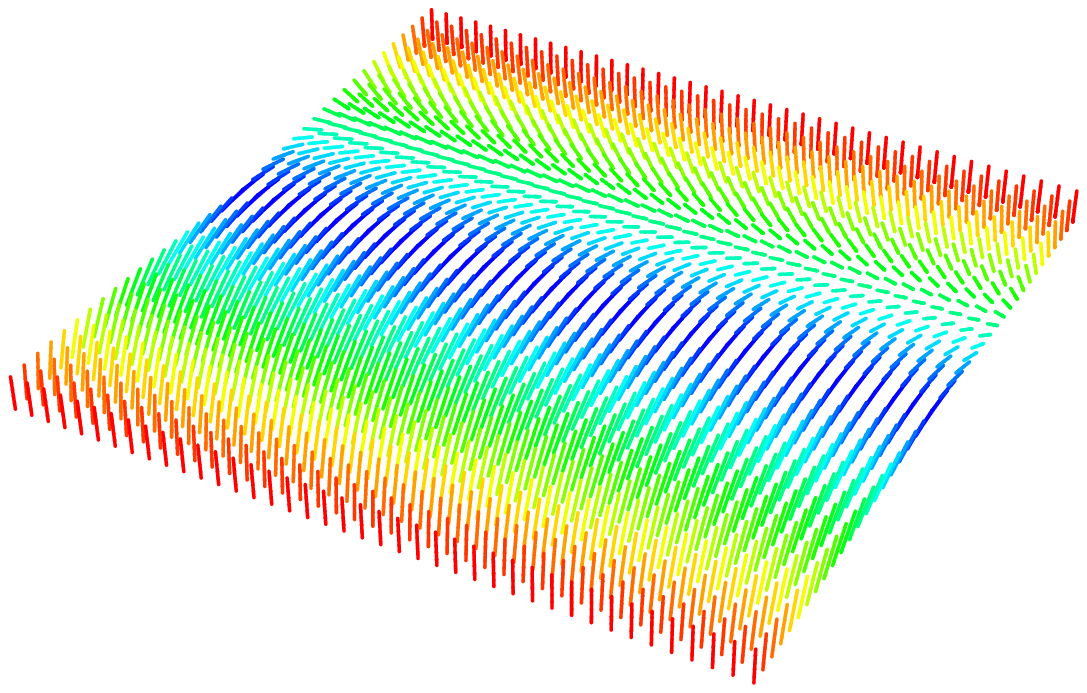}}\\
\subfloat[]{\label{skyrmion01}\includegraphics[width=0.2\textwidth]{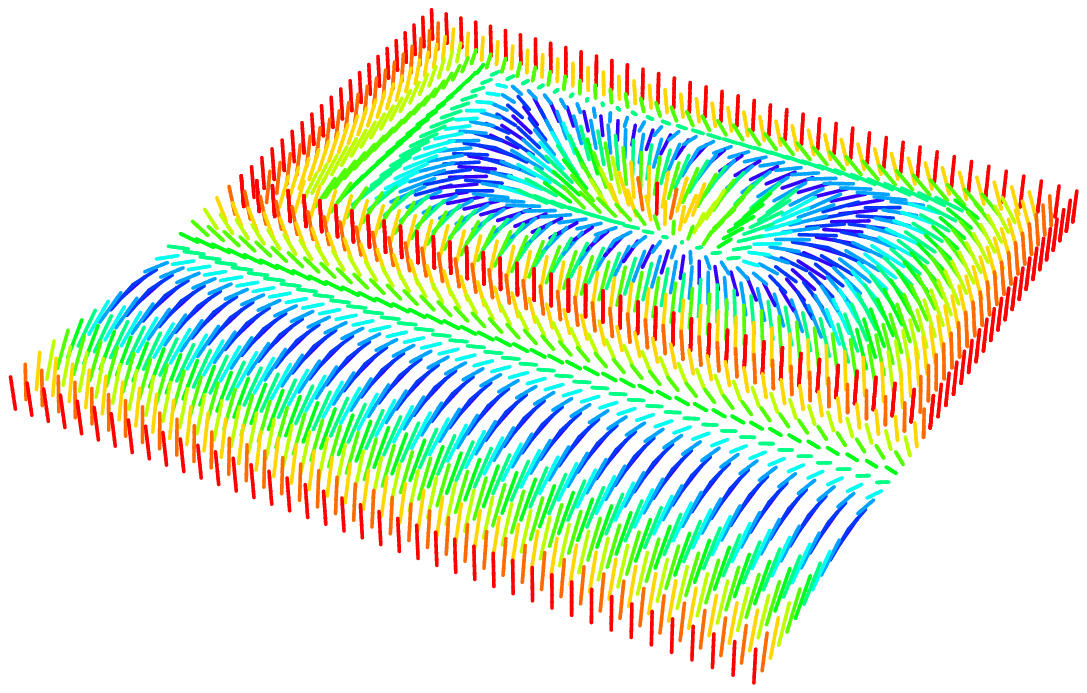}}\quad
\subfloat[]{\label{21}\includegraphics[width=0.2\textwidth]{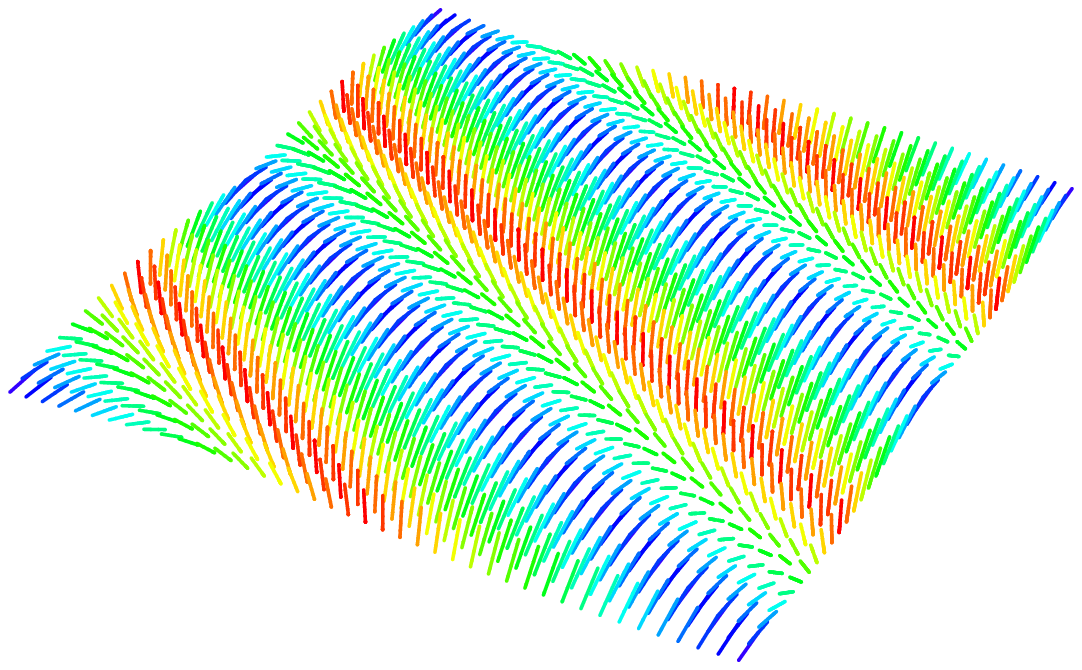}}\\
\subfloat[]{\label{skyrmion11}\includegraphics[width=0.2\textwidth]{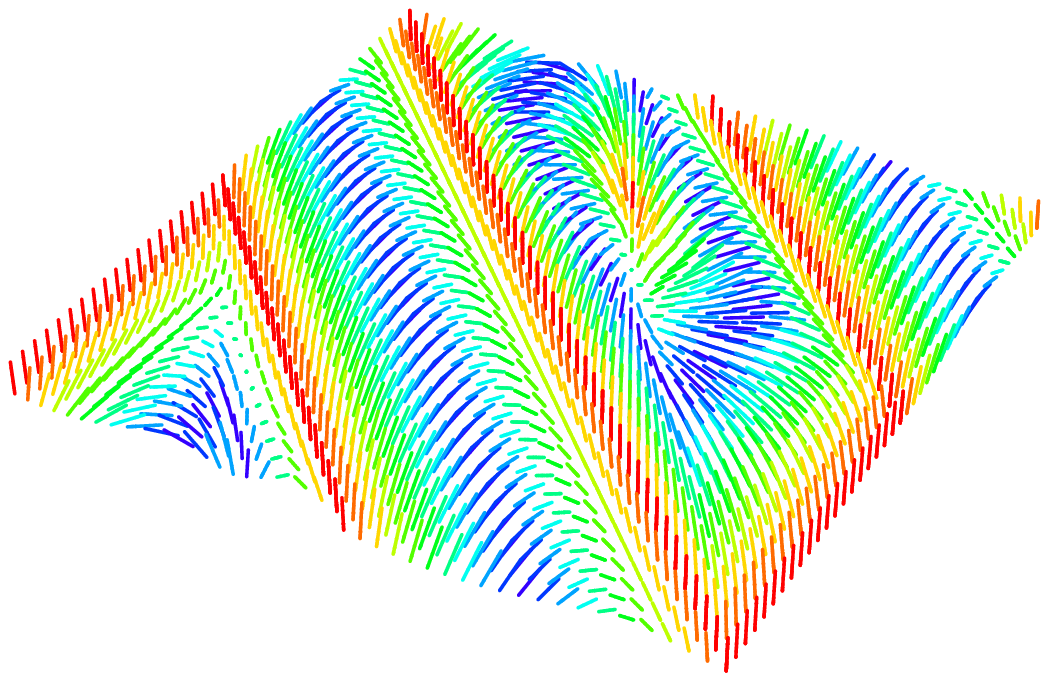}}\quad
\subfloat[]{\label{31}\includegraphics[width=0.2\textwidth]{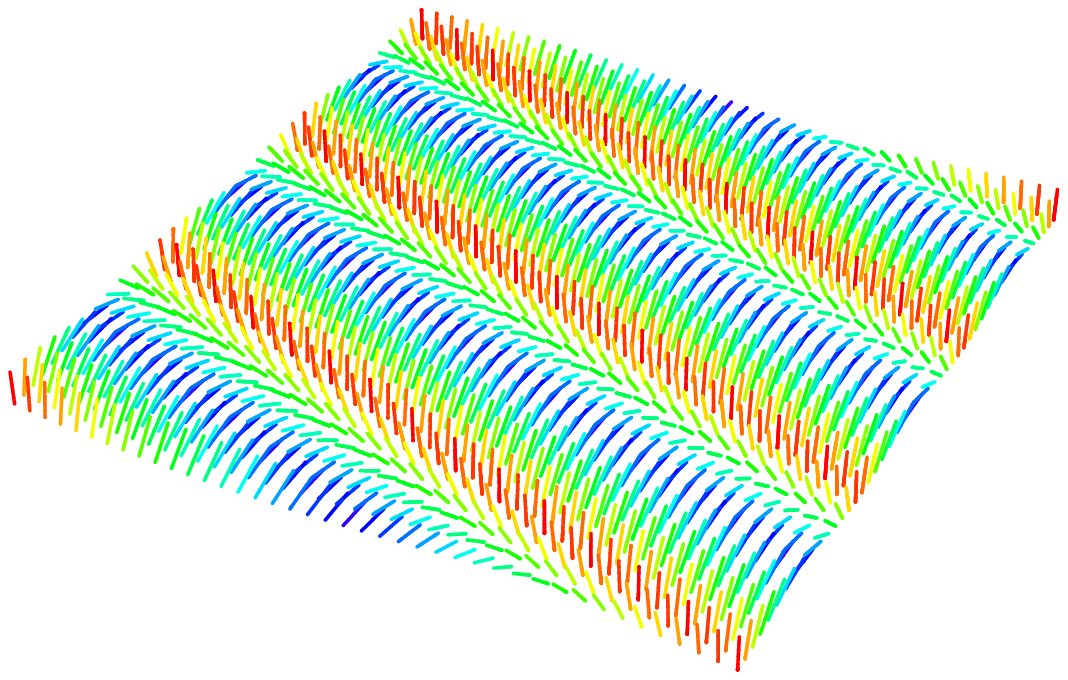}}
\caption[]{Maps $T^2\to\text{Gr}_1(\mathbb{R}^3)$ visualized by placing the image of a point (a line in $\mathbb{R}^3$) on the point itself. $T^2$ is modelled here as a square with periodic boundary conditions. Colours represent the angle to the axis out of the plane. Using the notation $(n_0;n_1,n_2)$, \subref{skyrmion} corresponds to $(1;0,0)$, \subref{01} to $(0;1,0)$, \subref{skyrmion01} and \subref{21} to $(1;1,0)$ and  \subref{skyrmion11} and \subref{31} to $(1;1,1)$. Remarkably, all except \subref{skyrmion} are homotopic to stacked one-dimensional insulators.}\label{skyrmions}
\end{figure}

Fig.~\ref{skyrmions} illustrates some representatives. While a single skyrmion representing $(1;0,0)$ cannot be realized by stacking, having a skyrmion combined with non-trivial projections allows for this possibility.

\subsection{Weak but not stackable}

The following is an example of a weak topological insulator (acording to both definitions) in the stable regime, which cannot be realized through stacking: In two dimensions, consider a $4n$-band model with $2n$ occupied and $2n$ empty bands in class \AIII. Let there be a $U_1$-symmetry, for example conservation of the spin $S_z$-component, which commutes with the chiral operator (which in turn anti-commutes with the Hamiltonian). Then the ground state is a map
\begin{align}
\psi:T^2\to U_n\times U_n
\end{align}
and the topological phases  (homotopy classes) are given by
\begin{align}
[T^2,U_n\times U_n]
&=[T^2,U_n]\times[T^2,U_n]\nonumber\\
&=(\Z\times\Z)\times(\Z\times\Z)\label{4invariants}
\end{align}
Since $[S^2,U_n]=0$ (only weak topological insulators here, according to either definition) and $U_n$ is connected, the factors $\Z$ originate from $[S^1,U_n]=\Z$. Writing $\psi(k_1,k_2)=(\psi_1(k_1,k_2),\psi_2(k_1,k_2))\in U_n\times U_n$, the invariants \eqref{4invariants} are given by the winding numbers of $\det(\psi_i(k_1,0))$ and $\det(\psi_i(0,k_2))$ for $i=1,2$.

One-dimensional versions of this model are classified by $[S^1,U_n\times U_n]=\Z\times\Z$, with invariants given by the winding numbers of $\det(\psi_i(k))$ with $i=1,2$ and $k\in S^1$. Stacking a representative of the class $(n,m)$ according to some matrix $A\in\text{GL}_2(\Z)$ yields an element in the class
\begin{align}
(A_{11}n,A_{21}n)&\times(A_{11}n,A_{21}m)\nonumber\\
\in(\Z\times\Z)&\times(\Z\times\Z)
\end{align}
Clearly, not all classes can be of this form, the simplest counter-example being $(1,0)\times(0,1)$. The mathematical reason is the fact that $\Z\times\Z$ is not generated by a single element. The physical reason is that the non-trivial winding for spin up happens along a linearly independent direction from that of the non-trivial winding for spin down and therefore there is no corresponding one-dimensional system.

\section{Conclusion}

We have proposed to use the natural concept of homotopy theory to extend the results from $K$-theory (and the intermediate step of considering isomorphism classes of vector bundles) to the non-stable regime, which includes \textit{all} insulators independent of the number of occupied/empty bands. We showed that the definition of strong topological insulators has to be more restrictive in general in order to avoid the possibility of realising them by stacking lower-dimensional systems. Furthermore, we demonstrated that there are topological insulators in $d$ dimensions that are "truly $d$-dimensional" despite being weak (independently of how the latter is defined), meaning they cannot be realized by stacking lower-dimensional systems.

Along the way, we derived some useful technical results: We showed how the factorization of topological invariants in the stable regime can be understood from the perspective of homotopy theory and proved that in general (in the stable and non-stable regime), it is legitimate to replace a domain consisting of products of spheres by a single sphere of the same total dimension, even in the presence of $\Z_2$-equivariance.

\section{Acknowledgements}

This work is supported by Deutsche Telekom Stiftung (RK) and Bonn-Cologne Graduate School (RK and CG). RK would like to thank Dominik Ostermayr and Martin Zirnbauer for many useful discussions.

\section{Appendix A: Proof of (\ref{decomposition})}

The crucial feature of the stable regime is a result called \textit{Bott periodicity}~\cite{bott}: Denoting by $\Omega Y$ the space of all based loops in $Y$, it states that there are maps
\begin{align}
B_s^\mathbb{C}:C_s&\to\Omega{C_{s-1}}\\
B_s^\mathbb{R}:R_s&\to\Omega{R_{s-1}},
\end{align}
inducing isomorphisms
\begin{align}
B_{s*}^\mathbb{C}:[S^d,C_s]^*&\stackrel{\sim}{\to}[S^d,\Omega{C_{s-1}}]^*=[S^{d+1},C_{s-1}]^*\label{eq:bottC}\\
B_{s*}^\mathbb{R}:[S^d,R_s]^*&\stackrel{\sim}{\to}[S^d,\Omega{R_{s-1}}]^*=[S^{d+1},R_{s-1}]^*,\label{eq:bottR}
\end{align}
for all $1\le d\le d_\text{max}$, where $d_\text{max}$ depends on the symmetry class and increases monotonously with $m$ and $n$. In the stable limit $m,n\to\infty$, we have $d_\text{max}\to\infty$ and $B_s^\mathbb{C}$ as well as $B_s^\mathbb{R}$ become what is called weak homotopy equivalences. The case $d=0$ may be included for $B_{s*}^\mathbb{C}$ if $s$ is odd and for $B_{s*}^\mathbb{R}$ if $s\ne2,6$. The reason is that the spaces $C_s$ ($s$ even), $R_2$ and $R_6$ have $n-1$ connected components (see Table~\ref{periodic}), whereas the corresponding right hand sides of eqs.~\eqref{eq:bottC} and~\eqref{eq:bottR} are isomorphic to $\Z$. The set $[S^d,Y]^*$ can be equipped with a group structure for $d\ge1$, which is given by concatenation of loops for $d=1$ and a similar construction using only one coordinate for $d>1$~\cite{hatcher}. These groups are known as the homotopy groups $\pi_d(Y)$ and the group structure will be important in the following.

Bott periodicity sets apart the stable from the non-stable regime and therefore, not surprisingly, will be central to our proof. This distinction is to be expected from the non-stable examples in the main text, for which \eqref{decomposition} does not hold.

The torus $T^d$ is a $\Z_2$-CW complex~\cite{tomdieck2,shah,may} (a space with $\Z_2$-action built by inductively attaching disks of increasing dimensions along their boundary spheres) and as such the relations between homotopy groups give a lot of information about the relations between the sets of topological phases in the different classes. This statement is formalized by the equivariant Whitehead Theorem~\cite{tomdieck2,shah,may}, which states that the fact that $B_s^\C$ is equivariant and restricts to $B_s^\R$ (see~\cite{quast}), both of which induce bijections on homotopy groups for $d\le d_\text{max}$ and odd $s$, implies that $B_s^\C$ also induces a bijection
\begin{align}\label{whitehead}
B_{s*}^\mathbb{C}:[T^d,C_s]_{\Z_2}\stackrel{\sim}{\to}[T^d,\Omega C_{s-1}]_{\Z_2},
\end{align}
for $d<d_\text{max}$ and odd $s$.

The last ingredient needed in order to further evaluate the right hand side of \eqref{whitehead} is the equivariant free loop fibration, which requires the introduction of some notation: If $g\in\Z_2$ is the non-trivial element, then for a space $Y$ with $\Z_2$-action, we denote by $Y^{\Z_2}$ the subset that is fixed under $g$. An important space is the space of free (i.e. unbased) loops in a space $Y$, denoted by $LY$, which is equipped with the $\Z_2$-action $f\mapsto g\circ f\circ g$ for $f:S^1\to Y$ (with the action $g\cdot\phi=-\phi$ on the angle coordinate of $S^1$). The space of equivariant free loops is then given by $(LY)^{\Z_2}$. The equivariant free loop fibration is a map $p:(LY)^{\Z_2}\to Y^{\Z_2}$ assigning to a free equivariant loop $S^1\to Y$ its value at a fixed point $s_0\in (S^1)^{\Z_2}=S^0$. The fiber at a point $y_0\in Y^{\Z_2}$ is $(\Omega Y)^{\Z_2}$ (based equivariant loops at $y_0$). Importantly, this fibration has a section $q:Y^{\Z_2}\to (LY)^{\Z_2}$ given by assigning to $y_0\in Y^{
\Z_2}$ the constant loop at $y_0$, which yields $p\circ q=\mathrm{Id}_{Y^{\Z_2}}$.

All fibrations provide a long exact sequence of homotopy groups. In the case of the equivariant free loop fibration, the existence of the section $q$ implies that this sequence splits into short exact sequences
\begin{align}
0\to\pi_k((\Omega Y)^{\Z_2})\stackrel{i_*}{\to}\pi_k((LY)^{\Z_2})\overset{p_*}{\underset{q_*}{\rightleftarrows}}\pi_k(Y^{\Z_2})\to0\label{short}
\end{align}
for all $k\ge0$, where $i_*$ is induced by the inclusion $i$ of the fiber $(\Omega Y)^{\Z_2}$ into $(LY)^{\Z_2}$. As stated previously, $\pi_k$ has a group structure for $k\ge1$ and in that case, all maps in~\eqref{short} are homomorphisms. In this situation, every element of $\pi_k((LY)^{\Z_2})$ can be written uniquely as a sum $i_*[a]+s_*[b]$ for $[a]\in\pi_k((\Omega Y)^{\Z_2})$ and $[b]\in\pi_k(Y^{\Z_2})$. In other words,
\begin{align}
\pi_k((LY)^{\Z_2})&\stackrel{\text{as sets}}{\simeq}\pi_k((\Omega Y)^{\Z_2})\times\pi_k(Y^{\Z_2})
\end{align}
The fact that this decomposition only works for $k\ge1$ is crucial: The left hand side of \eqref{whitehead}, which we want to evaluate in the end, is the same as $\pi_0((L^d C_s)^{\Z_2})$. Here $(L^d C_s)^{\Z_2}$ is the $d$-fold iterated equivariant free loop space of $C_s$, which is the space of equivariant maps from $T^d$ to $C_s$ and $\pi_0$ is the set of its connected components. In other words, it is the set of $d$-dimensional topological phases in symmetry class $s$. Eqs. \eqref{hopf} and \eqref{nematics} show that the decomposition cannot work in general at the level of $\pi_0$ and consequently we  really need to go to the loop space $\Omega C_{s-1}$ in order to arrive at the case $k=1$ and complete the proof:
\begin{align}
[T^d,C_s]_{\Z_2}&\simeq[T^d,\Omega C_{s-1}]_{\Z_2}\nonumber\\
&\simeq\pi_1((L^d C_{s-1})^{\Z_2})\nonumber\\
&\simeq\pi_1((L^{d-1}\Omega C_{s-1})^{\Z_2})\times\pi_1((L^{d-1} C_{s-1})^{\Z_2})\nonumber\\
&\,\,\,\vdots\nonumber\\
&\simeq\prod_{l=0}^d(\pi_{1}((\Omega^{l}C_{s-1})^{\Z_2}))^{d \choose l}\label{eq:factorized}\\
&\simeq\prod_{l=0}^d(\pi_{0}((\Omega^{l}C_{s})^{\Z_2}))^{d \choose l}\nonumber\\
&\simeq\prod_{l=0}^d\Big([S^l,C_s]^*_{\Z_2}\Big)^{d \choose l}\nonumber
\end{align}
In the second to last equality, the equivariant Whitehead theorem was used again (this time in its base-point preserving version) in order to arrive at a result for the target space $C_s$. This completes the proof for both the real and the complex classes (the latter are included by choosing trivial $\Z_2$-actions) with odd $s$.

For even $s$, the requirements for the equivariant Whitehead theorem are not met since the complex Bott maps $B_{s*}^\C$ in eq.~\eqref{eq:bottC} are not a bijection for $d=0$. This shortcoming is remedied by replacing $C_s$ by its connected component $(C_s)_0$ containing the base point as well as $\Omega C_{s-1}$ by its connected component $(\Omega C_{s-1})_0$ containing the constant loop at the base point of $C_{s-1}$. The equivariant Whitehead theorem then gives a bijection
\begin{align}
[T^d,(C_s)_0]_{\Z_2}\simeq[T^d,(\Omega C_{s-1})_0]_{\Z_2}.
\end{align}
The right hand side of this equation is a subset of $[T^d,\Omega C_{s-1}]_{\Z_2}$. It can be identified in the decomposition in~\eqref{eq:factorized} as the subset with the factor $\pi_1(C_{s-1}^{\Z_2})=\pi_1(R_{s-1})$ replaced by $\ker((i_{s-1})_*)\subset\pi_1(R_{s-1})$, where
\begin{align}
(i_{s-1})_*:\pi_1(R_{s-1})\to\pi_1(C_{s-1})
\end{align}
is the induced map of the inclusion $i_{s-1}:R_{s-1}\hookrightarrow C_{s-1}$.

For the real classes with $s\neq2,6$, $\ker((i_{s-1})_*)=\pi_1(R_{s-1})$ and $[T^d,C_s]_{\Z_2}=[T^d,(C_s)_0]_{\Z_2}$, where latter follows from the observation that for $s\neq2,6$, $R_s\subset(C_s)_0$ and therefore the image of equivariant maps from $T^d$ is always contained in $(C_s)_0$. Thus, the result in these cases is equivalent to the result for odd $s$.

In the symmetry classes A, \AII ($s=2$) and \AI ($s=6$), the set $\ker((i_{s-1})_*)$ contains only one element and $[T^d,C_s]_{\Z_2}\neq[T^d,(C_s)_0]_{\Z_2}$, so the result needs to be modified as stated below eq.~\eqref{decomposition}.

\section{Appendix B: Proof of (\ref{inclusion1}) and (\ref{inclusion2})}

In this section we give a proof of the following theorem (eqs.~\eqref{inclusion1} and \eqref{inclusion2}):
\begin{thm}\label{theorem1}
\begin{align*}
[S^d,Y]_{\Z_2}&\subset[T^d,Y]_{\Z_2}\\
[S^{d_x+d_k},Y]_{\Z_2}&\subset[S^{d_x}\times T^{d_k},Y]_{\Z_2}
\end{align*}
\end{thm}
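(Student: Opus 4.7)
The plan is to realize both inclusions as pullbacks along equivariant ``top-cell collapse'' maps and prove injectivity via the associated equivariant Puppe sequence, the key analytic input being the vanishing of Whitehead products after one suspension.

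\textbf{Setup.} Give $S^1$ the CW structure whose single $0$-cell is placed at the fixed point $k=0$ of the involution $k\mapsto -k$, so that the product CW structure on $T^d=(S^1)^d$ is $\Z_2$-equivariant in the sense that its $(d-1)$-skeleton $T^d_{(d-1)}$ (the ``fat wedge'' of points having at least one vanishing coordinate) is $\Z_2$-invariant. The quotient $T^d/T^d_{(d-1)}\cong S^d$ then inherits a $\Z_2$-action whose fixed point set is $S^0$ --- the classes of $(0,\dots,0)$ and $(\pi,\dots,\pi)$ --- matching the action on $S^d$ used throughout the paper. The equivariant collapse $q\colon T^d\to S^d$ induces the candidate inclusion $q^*\colon[S^d,Y]_{\Z_2}\to[T^d,Y]_{\Z_2}$ by precomposition. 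For the second inclusion, the analogous equivariant collapse $S^{d_x}\times T^{d_k}\to S^{d_x+d_k}$ is defined identically, using any $\Z_2$-equivariant CW structure on $S^{d_x}$ with a $0$-cell at a fixed point and a single top cell.

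\textbf{Injectivity.} The attaching map of the top cell is an equivariant map $\varphi\colon S^{d-1}\to T^d_{(d-1)}$ fitting into the Puppe sequence
$$S^{d-1}\xrightarrow{\varphi}T^d_{(d-1)}\to T^d\xrightarrow{q}S^d\xrightarrow{\Sigma\varphi}\Sigma T^d_{(d-1)}\to\cdots$$
Applying $[-,Y]^*_{\Z_2}$ yields a sequence of pointed sets exact at $[S^d,Y]^*_{\Z_2}$, so that the kernel of $q^*$ equals the image of $(\Sigma\varphi)^*$; it therefore suffices to show that $\Sigma\varphi$ is equivariantly null-homotopic. But $\varphi$ is an iterated Whitehead product of the $d$ axis loops $S^1\hookrightarrow T^d_{(d-1)}$ (for $d=2$ it is exactly the commutator $[a,b]$), and Whitehead products die after one suspension by the classical James splitting $\Sigma(X\times Y)\simeq\Sigma X\vee\Sigma Y\vee\Sigma(X\wedge Y)$; this splitting is equivariant whenever the basepoints used are $\Z_2$-fixed, as is the case here. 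Hence $\Sigma\varphi$ is equivariantly null, $q^*$ is injective on based equivariant classes, and the unbased statement follows because $q$ preserves the fixed basepoint so that the $\pi_1(Y^{\Z_2})$-actions by basepoint change on source and target are compatible. The second inclusion is established by the same argument, with the attaching map of the top cell of $S^{d_x}\times T^{d_k}$ being a Whitehead product involving the fundamental class of $S^{d_x}$ and the iterated commutators on the $T^{d_k}$ side.

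\textbf{Main obstacle.} The anticipated difficulty is lifting the Puppe-sequence exactness, the James splitting, and the Whitehead-product vanishing to the $\Z_2$-equivariant category. All three hold for pointed $\Z_2$-CW complexes whose basepoints are fixed --- precisely our setup --- but each has to be invoked in its equivariant form rather than imported verbatim from the non-equivariant literature. A secondary subtlety is that the $\Z_2$-action on the open top cell of $T^d$ restricts to the antipodal (free) action on its boundary sphere $S^{d-1}$ carrying $\varphi$; this is still compatible with the equivariant Whitehead construction but needs explicit verification.
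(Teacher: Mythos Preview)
Your route via the equivariant cofiber/Puppe sequence of the top-cell collapse is genuinely different from the paper's proof, which works iteratively: it rewrites $[T^d,Y]_{\Z_2}$ as $[S^1,L^{d-1}Y]_{\Z_2}$ and strips off one free-loop coordinate at a time, using a hands-on lemma (their Lemma~1) to control how the $\pi_1(Y^{\Z_2})$-action on based classes behaves when one passes between $\Omega Y$ and $LY$. Your argument is more global and conceptual, trading that explicit loop-space bookkeeping for a single structural fact about Whitehead products; either strategy is reasonable, and your handling of the based-to-unbased passage (via $\pi_1(Y^{\Z_2})$-equivariance of $q^*$) is adequate.

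There is, however, a real gap in your injectivity step. Exactness of the Puppe sequence at $[S^d,Y]^*_{\Z_2}$ only tells you that the \emph{preimage of the basepoint} under $q^*$ equals the image of $(\Sigma\varphi)^*$. Since $[T^d,Y]^*_{\Z_2}$ is merely a pointed set and $q^*$ is not a group homomorphism, a trivial ``kernel'' in this sense does not force $q^*$ to be injective: two distinct nontrivial classes could a priori have the same image. What you actually need is that the coaction $\mu\colon S^d\to S^d\vee\Sigma T^d_{(d-1)}$ is equivariantly homotopic to the inclusion $\iota_1$ of the first wedge summand, since the fibres of $q^*$ are exactly the orbits of the $[\Sigma T^d_{(d-1)},Y]^*_{\Z_2}$-action induced by $\mu$. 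This does follow from your hypothesis, but requires an extra argument: by naturality of the coaction applied to the characteristic map of the top cell $(D^d,S^{d-1})\to(T^d,T^d_{(d-1)})$, the coaction factors as $\mu=(\mathrm{id}\vee\Sigma\varphi)\circ\mathrm{pinch}$, so $\Sigma\varphi\simeq *$ equivariantly forces $\mu\simeq\iota_1$ and hence a trivial action. Once you insert this coaction argument (in its $\Z_2$-equivariant form), your proof goes through; as written, the inference from ``$\Sigma\varphi$ is null'' to ``$q^*$ is injective'' is unjustified.
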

The first line allows defining \textit{strong} topological insulators as non-trivial elements of the left hand side, while the second line shows that in the presence of defects with codimension $d_x+1$, one may replace the product of sphere and torus by a single sphere of equal total dimension, at the cost of potentially missing non-trivial classes.

As a model for maps from both torus and sphere, we will use the $d$-dimensional cube $[-\pi,\pi]^d$ as the domain, with coordinates $-\pi\le x_i\le\pi$, $i=1,\dots,d$. Maps from the torus are realized by requiring them to be periodic in all directions (same values on opposing sides of the cube), while maps from the sphere are required to map the entire boundary of the cube to a single point. The action of $\Z_2$ on the cube is given coordinate-wise as either $x_i\mapsto-x_i$ (non-trivial or momentum-like) or $x_i\mapsto+x_i$ (trivial or position-like), which gives a total of $2^d$ possible actions.

A crucial ingredient in the proof is the $\Z_2$-equivariant version of the relation between homotopy classes of maps with fixed basepoints, denoted $[S^d,Y]_{\Z_2}^*$, and those without fixed basepoints, denoted $[S^d,Y]_{\Z_2}$. If the fixed point set $Y^{\Z_2}$ is connected, then any representative of an unbased class can be homotoped to a based map (we assume that the basepoint $y_0\in Y^{\Z_2}$). However, two based maps that represent different elements in $[S^d,Y]_{\Z_2}^*$ may represent the same element in $[S^d,Y]_{\Z_2}$, meaning there can be an unbased homotopy even though no based one exists. This unbased homotopy takes the image of the basepoint $s_0\in(S^d)^{\Z_2}$ to a loop in $Y^{\Z_2}$ and identifying all based maps up to these loops gives a bijection~\cite{tomdieck,whiteheadbook}
\begin{align}\label{based}
[S^d,Y]_{\Z_2}\simeq [S^d,Y]_{\Z_2}^*/[S^1,Y^{\Z_2}]^*.
\end{align}
If $Y^{Z_2}$ is disconnected, we denote by $Y^{Z_2}_0$ the component containing the base point and introduce the notation $[(X,x_0),(Y,Z)]_{\Z_2}$ for equivariant homotopy classes of maps $X\to Y$ which map $x_0\in X$ to $Z\subset Y$. For example, given base points $s_0\in S^d$ and $y_0\in Y^{\Z_2}\subset Y$, we have $[S^d,Y]^*_{\Z_2}=[(S^d,s_0),(Y,\{y_0\})]_{\Z_2}$. Then eq.~\eqref{based} holds in the following, modified form:
\begin{align}\label{based2}
[(S^d,s_0),(Y,Y^{\Z_2}_0)]_{\Z_2}\simeq [S^d,Y]_{\Z_2}^*/[S^1,Y^{\Z_2}]^*.
\end{align}
In fact, eqs.~\eqref{based} and \eqref{based2} hold more generally with $S^d$ replaced by any $\Z_2$-CW complex, though this will not be needed in the present paper.

The identifications~\eqref{based} and \eqref{based2} have a simple geometrical interpretation: The boundary of $S^d=[-\pi,\pi]^d$ is always a fixed point of the $\Z_2$-action and as such it has to map to $Y^{\Z_2}$. A loop $\gamma$ representing a class in $[S^1,Y^{\Z_2}]^*$ now acts on a representative $f$ of a class in $[S^d,Y]_{\Z_2}^*$ by moving the image point of the boundary along $\gamma$ to give a map $b_d(\gamma,f):S^d\to Y$ (see Fig. \ref{action}). In formulas,
\begin{align}\label{bd}
b_d(\gamma,f)(x):=
\begin{cases}
f(2x)&\text{ for }|x|\le \frac{\pi}{2}\\
\gamma(3\pi-4|x|)&\text{ for }|x|>\frac{\pi}{2},
\end{cases}
\end{align}
where $|x|:=\max(x_i)_{i=1\dots d}$.

\begin{figure}
\centering
\subfloat[]{\label{action1d}\includegraphics[width=0.2\textwidth]{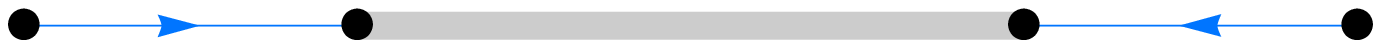}}\quad
\subfloat[]{\label{action2d}\includegraphics[width=0.2\textwidth]{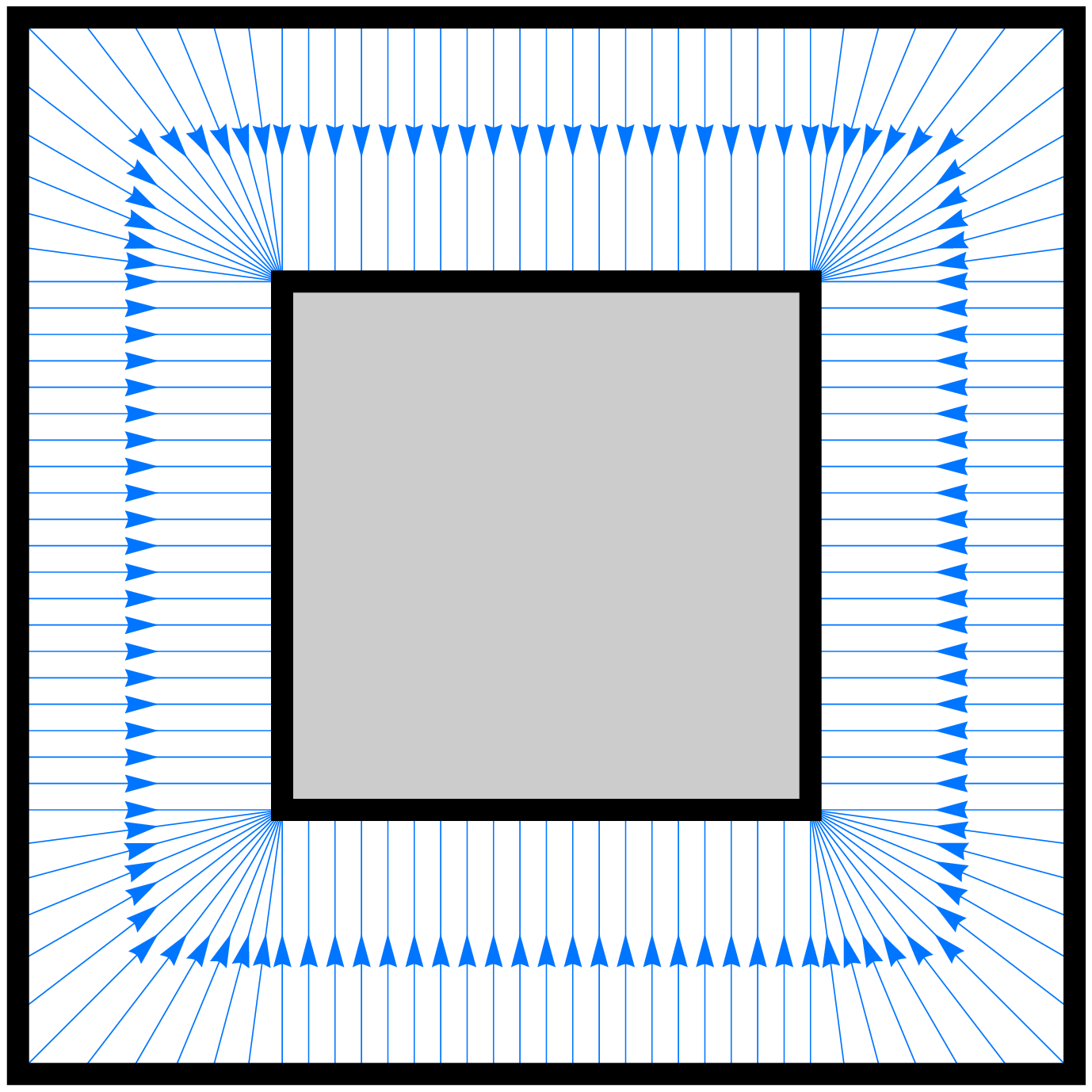}}\\
\subfloat[]{\label{action3d}\includegraphics[width=0.2\textwidth]{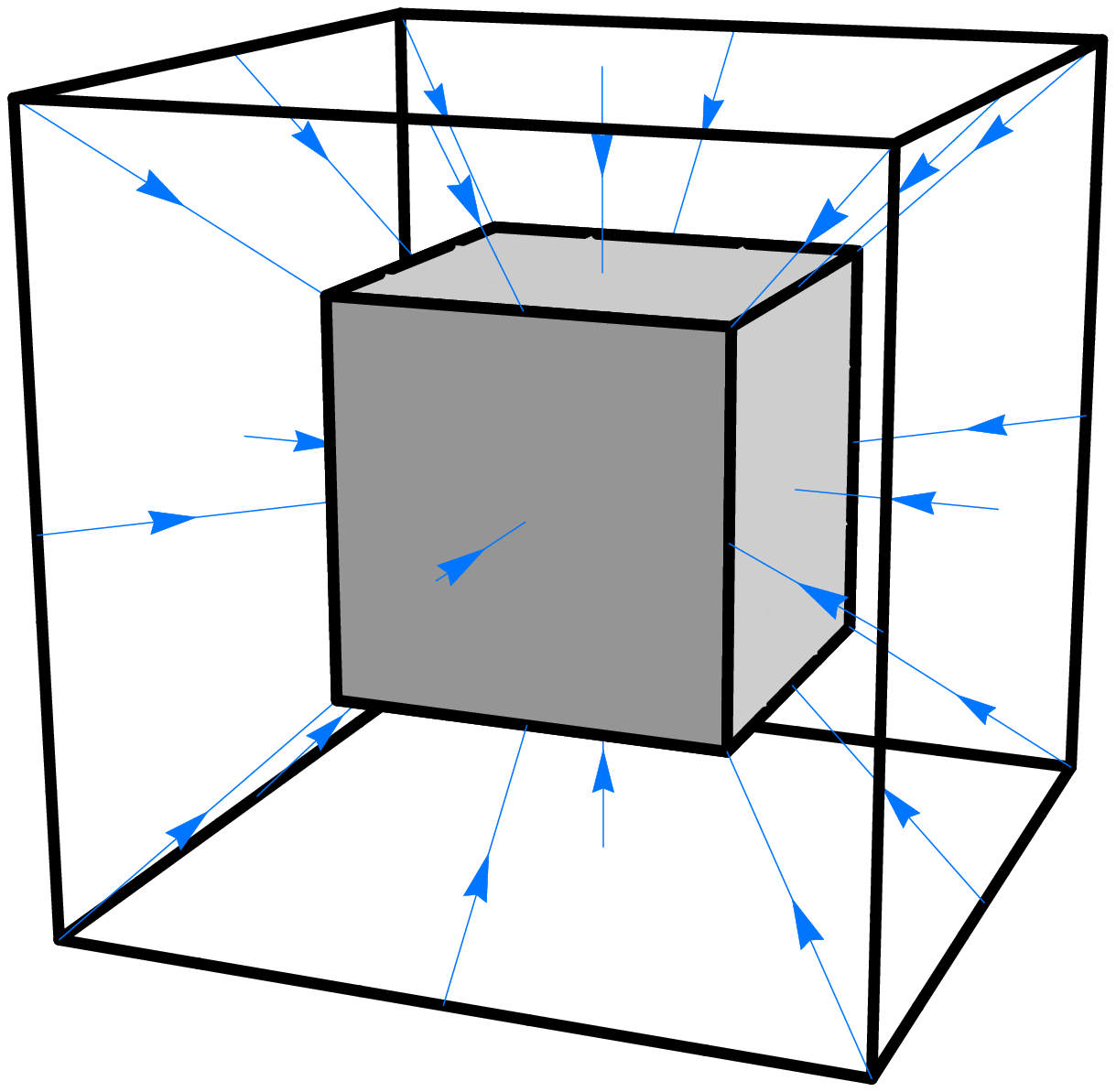}}\quad
\caption[]{The domain of $b_d(\gamma,f)$ for \subref{action1d} $d=1$, \subref{action2d} $d=2$ and \subref{action3d} $d=3$. The loop $\gamma$ is represented in blue with an arrow indicating the direction in which it is traversed and the domain of $f$ is depicted in gray. In \subref{action1d} and \subref{action2d}, black points are mapped to the base point $y_0\in Y^{\Z_2}$. In \subref{action3d}, the entire surfaces of the two cubes are mapped to $y_0$.}\label{action}
\end{figure}

Although defined on the level of representatives, eq.~\eqref{bd} yields a well-defined action on the level of homotopy classes and the orbit of this action is identified on the right hand side of~\eqref{based}. In the following special case, the map $b_d$ simplifies considerably, which will later be crucial for the proof of the theorem:

\begin{lemma}\label{lemma1}
For $[\gamma]\in[S^1,(LY)^{\Z_2}]^*$ and $[f]\in[S^d,\Omega Y]_{\Z_2}^*$,
\begin{align}
[b_d(\gamma,f)]=[b_{d+1}(\gamma(\cdot)(0),f)]\text{ in }[S^d,LY]_{\Z_2}^*.
\end{align}
\end{lemma}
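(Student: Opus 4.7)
The plan is to construct an explicit $\Z_2$-equivariant, basepoint-preserving homotopy in $\mathrm{Map}^*_{\Z_2}(S^d, LY)$ connecting $b_d(\gamma, f)$ to the image under the adjunction $[S^{d+1},Y]^*_{\Z_2} \cong [S^d,\Omega Y]^*_{\Z_2} \hookrightarrow [S^d,LY]^*_{\Z_2}$ of $b_{d+1}(\gamma(\cdot)(0),\tilde f)$, where $\tilde f:S^{d+1}\to Y$ is the suspension adjoint of $f:S^d\to\Omega Y$.

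First I would unfold both representatives as maps $S^d\times S^1\to Y$ using the formula in eq.~\eqref{bd}. The LHS reads $(x,t)\mapsto f(2x)(t)$ on the inner cylinder $\{|x|\le\pi/2\}\times S^1$ and $(x,t)\mapsto\gamma(3\pi-4|x|)(t)$ on the outer shell, while the RHS reads $(x,t)\mapsto f(2x)(2t)$ on the inner box $\{|x|,|t|\le\pi/2\}$ and $(x,t)\mapsto\gamma(3\pi-4\max(|x|,|t|))(0)$ elsewhere. The two sides thus differ in where the inner/outer cutoff sits in the $t$-direction, in the $t$-rescaling of $f$'s loop argument, and in whether the outer piece uses the full loop $\gamma(\cdot)(t)$ or only the value $\gamma(\cdot)(0)$ at the fixed point $0\in S^1$.

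Next I would introduce the shrinking parameter $T(s)=\pi(1-s/2)$, $s\in[0,1]$, and define
\begin{align*}
\Phi_s(x,t)=\begin{cases}f(2x)\bigl(\pi t/T(s)\bigr) & |x|\le\pi/2,\ |t|\le T(s),\\[2pt]\gamma\bigl(3\pi-4\rho_s(x,t)\bigr)\bigl((1-s)t\bigr) & \text{otherwise,}\end{cases}
\end{align*}
where $\rho_s(x,t)=\max(|x|,\tau_s(|t|))$ with $\tau_s(|t|)=\pi/2+s\cdot\max(0,|t|-T(s))$. A direct evaluation shows $\Phi_0$ is the LHS (since $T(0)=\pi$ collapses the inner region to $\{|x|\le\pi/2\}\times S^1$, $\tau_0\equiv\pi/2$ forces $\rho_0=|x|$ on the outer shell, and $(1-0)t=t$) and $\Phi_1$ is the RHS (since $T(1)=\pi/2$ yields the smaller inner box, $\tau_1(|t|)=\max(\pi/2,|t|)$ gives $\rho_1=\max(|x|,|t|)$ on the outer region, and $(1-1)t=0$).

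The remaining verifications are: (i) continuity at the moving inner/outer interface, which holds because $\rho_s=\pi/2$ there, so $\gamma(\pi)$ is the constant loop at $y_0$, matching the value $y_0$ forced on the inner formula by the basepoint constraints of $f$; (ii) basepoint preservation at $|x|=\pi$, where $\rho_s=\pi$ and $\gamma(-\pi)$ is again the constant loop at $y_0$; (iii) $\Z_2$-equivariance, inherited from $\gamma$ and $f$ using that $\rho_s$ depends only on $|x|,|t|$ and that $(1-s)t$ is odd in $t$. The main design hurdle is engineering $\rho_s$ so that it interpolates between $|x|$ at $s=0$ and $\max(|x|,|t|)$ at $s=1$ while remaining exactly $\pi/2$ on the shifting inner/outer interface; the piecewise additive definition of $\tau_s$ is precisely what accomplishes this simultaneously at all intermediate $s$, and is where the geometric content of the proof is concentrated.
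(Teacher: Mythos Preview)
Your homotopy $\Phi_s$ is not a well-defined map $S^d\to LY$ for $0<s<1$: it fails to be periodic in the free-loop coordinate $t$. On the outer shell at $t=\pm\pi$ you have $\Phi_s(x,\pm\pi)=\gamma(u)\bigl(\pm(1-s)\pi\bigr)$ with $u=3\pi-4\rho_s(x,\pi)$. Since $\tau_s(\pi)=\pi/2+\pi s^2/2\in(\pi/2,\pi)$ for $s\in(0,1)$, the value $u$ is generically in $(-\pi,\pi)$, so $\gamma(u)$ is a generic equivariant free loop and there is no reason for $\gamma(u)\bigl((1-s)\pi\bigr)$ to equal $\gamma(u)\bigl(-(1-s)\pi\bigr)$. (Equivariance of $\gamma(u)$ only gives $\gamma(u)(-a)=g\cdot\gamma(u)(a)$, which does not help.) Your verification list covers continuity at the moving interface, basepoint preservation at $|x|=\pi$, and $\Z_2$-equivariance, but omits the periodicity condition $\Phi_s(x,\pi)=\Phi_s(x,-\pi)$ that is required for $\Phi_s(x,\cdot)$ to be an element of $LY$ at all.

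This is not a cosmetic issue: the scaling $t\mapsto(1-s)t$ attempts to deform the free loop $\gamma(u)(\cdot)$ to the constant $\gamma(u)(0)$ through free loops, which is impossible since the identity $S^1\to S^1$ is not null-homotopic. The paper's proof circumvents exactly this obstruction by a preliminary step: it first uses that the $t=\pm\pi$ slice of $b_d(\gamma,f)$ is the action of the based loop $\gamma(\cdot)(\pm\pi)$ on the constant map, hence null-homotopic, to make the $t=\pm\pi$ boundary equal to $y_0$. Only after this does the outer shell become a map into $\Omega Y$ (rather than $LY$), and the subsequent radial homotopies $\alpha_t,\beta_t$ can safely push the $\gamma$-collar down to $\gamma(\cdot)(0)$ without any periodicity constraint to satisfy. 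Your single-shot formula would need an analogous preparatory step before the $(1-s)t$ scaling can make sense.
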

On the right hand side of the equation, $f$ is interpreted as a map $S^{d+1}\to Y$.
\begin{proof}

The map $\gamma$ is a based loop of free loops with base point being the constant loop at $y_0\in Y$. Alternatively, it may be viewed as a free loop of based loops by switching the two loop coordinates. The latter interpretation is shown in Fig. \ref{initial} for $d=1$, where lines with arrows represent based loops. The fact that this is a free loop of based loops is indicated by the colour code: All these loops may be different, but there are periodic boundary conditions (the most upper based loop is the same as the lowest one, both being shown in orange). 

The map $b_d(\gamma,f)(\cdot,\pm\pi)$ is homotopic to $f(\cdot)(\pm\pi)$, since $f(x)(\pm\pi)=y_0$ and the action fixes the neutral element. This can be seen in Fig.~\ref{initial} for $d=1$: The upper and lower boundaries correspond to the concatenation of the based loop $\gamma(\cdot)(\pm\pi)$ (orange), the constant loop $f(\cdot)(\pm\pi)$ (black) and the reversed version of $\gamma(\cdot)(\pm\pi)$ (orange, reversed arrow). This combination is clearly homotopic to the constant loop and this homotopy is used to arrive at Fig.~\ref{alpha}.

\begin{figure}
\centering
\subfloat[]{\label{initial}\includegraphics[width=0.2\textwidth]{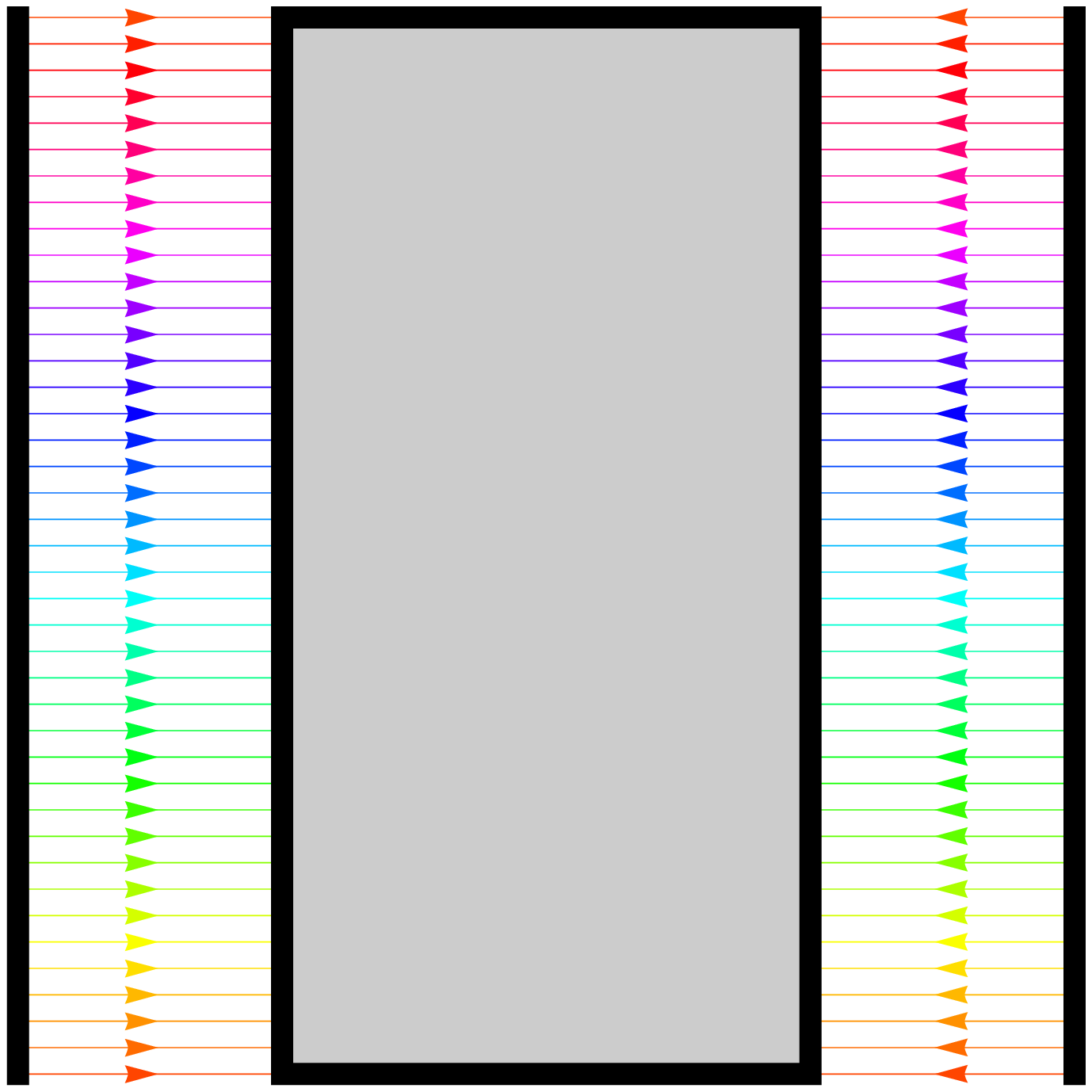}}\quad
\subfloat[]{\label{alpha}\includegraphics[width=0.2\textwidth]{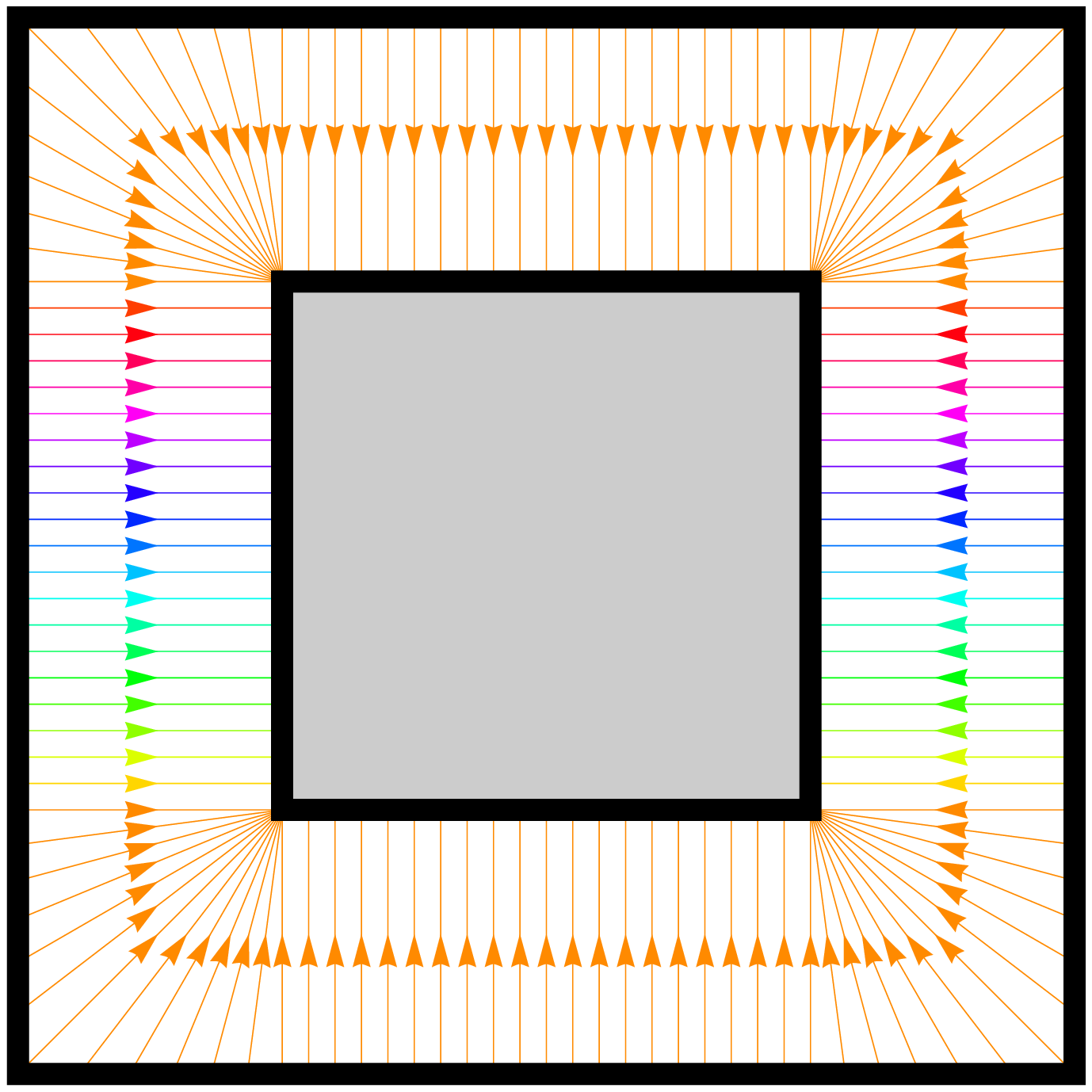}}\\
\subfloat[]{\label{beta}\includegraphics[width=0.2\textwidth]{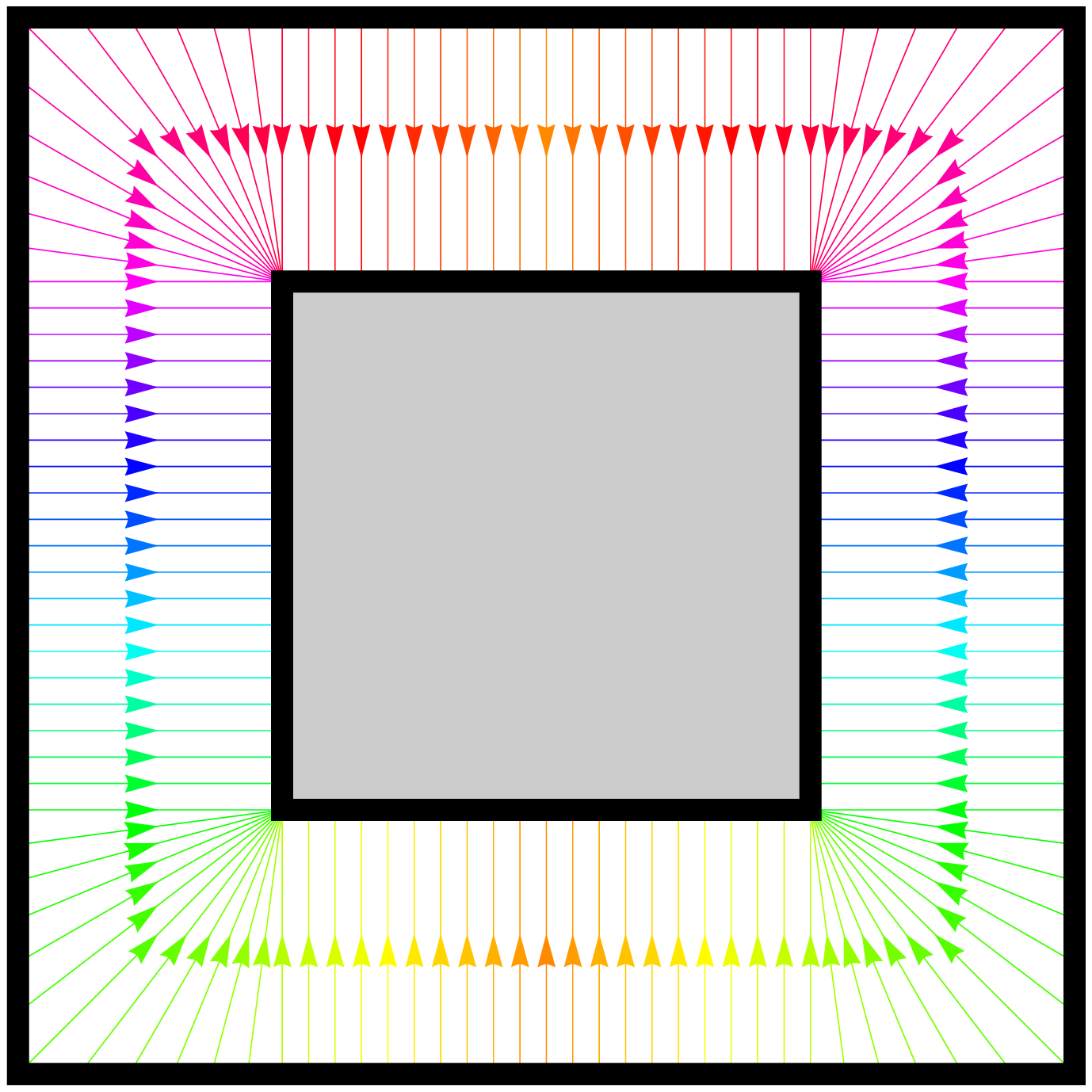}}\quad
\subfloat[]{\label{final}\includegraphics[width=0.2\textwidth]{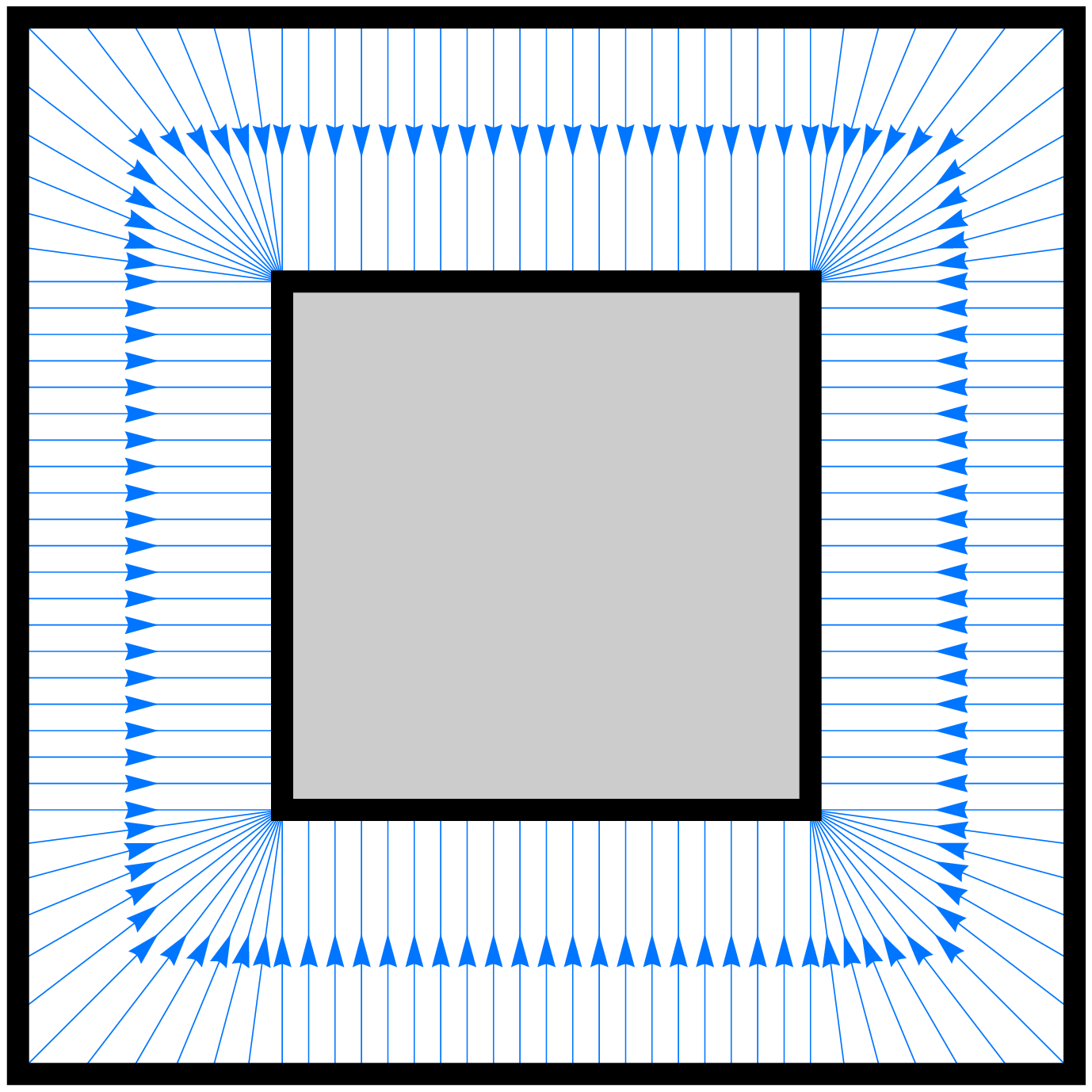}}
\caption[]{Steps in the proof of lemma \ref{lemma1} for $d=1$. The gray area corresponds to the domain of $f:S^1\to \Omega Y$ interpreted as a map $S^2\to Y$. All black lines are mapped to the base point of $Y$. \subref{initial} shows the domain of $b_1(\gamma,f)$, in this case given by conjugation of $f$ by $\gamma:S^1\to L_{\Z_2}Y$. The latter can be viewed as a free loop of based loops (coloured lines) and arrows indicate the direction in which the based loops are traversed. \subref{alpha} shows the result of applying the homotopy of the upper and lower sides to the constant map, giving the configuration with $\alpha_0$. The stage at $\alpha_1=\beta_0$ is shown in \subref{beta}, while \subref{final} depicts the final configuration with $\beta_1$, which corresponds to the domain of $b_{2}(\gamma(\cdot)(0),f)$.}\label{homotopies}
\end{figure}

For the next homotopies, the central part of the cube $[-\pi,\pi]^{d+1}$ which is associated with $f$ (gray area in Fig.~\ref{homotopies}) will remain invariant. The surrounding part is equivalent to a map $S^d\to \Omega Y$, but since we will only use special homotopies that leave the part with $x_{d+1}=0$ invariant (the blue loops in Fig.~\ref{alpha}), we will restrict to only one hemisphere of $S^d$, which is a disk $D^d$. The same homotopies will be applied to the other hemisphere.

We introduce the radial coordinate $0\le r\le1$ of $D^d$, which corresponds to $x_1=\dots=x_d=0$ at $r=0$ and to $x_{d+1}=0$ at $r=1$. The result of using the null homotopy of $b_d(\gamma,f)(\cdot,\pm\pi)$ is a map $\alpha_0:D^d\to \Omega Y$ depicted for $d=1$ in Fig.~\ref{alpha} and given in general by
\begin{align}
\alpha_0(r):=
\begin{cases}
\gamma(\pi)&\text{ for }r\le\frac{1}{2}\\
\gamma(2\pi(1-r))&\text{ for }r>\frac{1}{2},
\end{cases}
\end{align}
The first homotopy is given by
\begin{align}
\alpha_t(r):=
\begin{cases}
\gamma(\pi)&\text{ for }r\le\frac{1-t}{2}\\
\gamma(\frac{2\pi}{1+t}(1-r))&\text{ for }r>\frac{1-t}{2},
\end{cases}
\end{align}
where $0\le t\le1$. For $d=1$, this corresponds to the step from Fig.~\ref{alpha} to Fig.~\ref{beta}: The former shows $\alpha_0:D^1\to\Omega Y$, which maps to the orange loop at $r=0$ and to the blue loop at $r=1$. The homotopy $\alpha_t$ pushes the orange region completely to $r=0$ while "stretching" the remainder accordingly, which results in $\alpha_1$ shown in Fig.~\ref{beta}. 

Subsequently, all other loops are also pushed to $r=0$ and "annihilate", leaving only the blue one. In formulas, this second homotopy is given by
\begin{align}
\beta_t(r):=\gamma(\pi(1-r)(1-t)),
\end{align}
where $\beta_0=\alpha_1$.
Since all $\Z_2$-actions introduced for $[-\pi,\pi]^{d+1}$ fix the radial coordinate $r$ and all homotopies depend only on $r$, they all go through equivariant maps.
\end{proof}
For the next  lemma, we use lemma~\ref{lemma1} to show that the homotopy classes of maps with periodic boundary conditions in one coordinate of $[-\pi,\pi]^d$ include the classes of maps that map to a fixed point at the edges of that interval.
\begin{lemma}
\begin{align}
[(S^d,s_0),(LY,(LY)^{\Z_2}_0)]_{\Z_2}\supset[(S^{d+1},s_0),(Y,Y^{\Z_2}_0)]_{\Z_2}
\end{align}
\end{lemma}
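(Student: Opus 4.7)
My plan is to construct an injection $\Phi$ from the right-hand side to the left-hand side via the $S^1$-suspension--loop adjunction composed with the canonical inclusion $i\colon\Omega Y\hookrightarrow LY$. Modelling $S^{d+1}=[-\pi,\pi]^{d+1}/\partial$ with the extra coordinate carrying the non-trivial $\Z_2$-action (matching the $\Z_2$-action defining $LY$), an equivariant map $f\colon S^{d+1}\to Y$ is the same data as an equivariant $\tilde f\colon S^d\to\Omega Y$, and postcomposing with $i$ yields a map $S^d\to LY$ sending $s_0$ to the constant loop at $y_0\in(LY)^{\Z_2}_0$. By~\eqref{based2}, both sides are quotients of based classes by actions of $[S^1,Y^{\Z_2}]^*$ and $[S^1,(LY)^{\Z_2}]^*$ respectively, so the remaining work is to check that $\Phi$ descends to these quotients and is injective there.

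For well-definedness I would use that the constant-loop inclusion $q\colon Y\hookrightarrow LY$ is equivariant and thus lifts each $\gamma'\in[S^1,Y^{\Z_2}]^*$ to $q\circ\gamma'\in[S^1,(LY)^{\Z_2}]^*$. Applying Lemma~\ref{lemma1} with $\gamma=q\circ\gamma'$ gives
\begin{align*}
b_d(q\circ\gamma',i\circ\tilde f)=b_{d+1}(\gamma',\tilde f)\text{ in }[S^d,LY]^*_{\Z_2},
\end{align*}
since $(q\circ\gamma')(\cdot)(0)=\gamma'$. Thus the action of $\gamma'$ on the right-hand side is precisely the action of $q\circ\gamma'$ on the left, and $\Phi$ passes to the unbased quotients.

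For injectivity, suppose $\Phi[\tilde f_0]=\Phi[\tilde f_1]$, so $[i\circ\tilde f_0]=[b_d(\gamma,i\circ\tilde f_1)]$ in $[S^d,LY]^*_{\Z_2}$ for some $\gamma\in[S^1,(LY)^{\Z_2}]^*$. Lemma~\ref{lemma1} rewrites the right-hand side as $[b_{d+1}(\gamma(\cdot)(0),\tilde f_1)]$; since $\gamma(\cdot)(0)\in[S^1,Y^{\Z_2}]^*$ is a based loop, formula~\eqref{bd} for $b_{d+1}$ sends $\partial[-\pi,\pi]^{d+1}$ to $y_0$, so under the adjunction $b_{d+1}(\gamma(\cdot)(0),\tilde f_1)$ corresponds to a map $\tilde f_1'\colon S^d\to\Omega Y$, and $[i\circ\tilde f_0]=[i\circ\tilde f_1']$ in $[S^d,LY]^*_{\Z_2}$. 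The hardest step is then to promote this equality back to $[S^d,\Omega Y]^*_{\Z_2}$, i.e.\ to show that $i_*\colon[S^d,\Omega Y]^*_{\Z_2}\to[S^d,LY]^*_{\Z_2}$ is injective. I would derive this from the equivariant section $q$ of the equivariant free loop fibration $\Omega Y\to LY\to Y$, exactly as in Appendix~A: the section splits the long exact sequence of the fibration and trivialises the monodromy, killing the connecting map landing in $[S^d,\Omega Y]^*_{\Z_2}$. Granting this, $[\tilde f_0]=[\tilde f_1']$ in $[S^d,\Omega Y]^*_{\Z_2}\simeq[S^{d+1},Y]^*_{\Z_2}$, so $[f_0]$ differs from $[f_1]$ by $\gamma(\cdot)(0)\in[S^1,Y^{\Z_2}]^*$ and they represent the same element on the right-hand side, completing the injection.
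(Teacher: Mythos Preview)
Your argument is correct and reaches the goal, but by a genuinely different route from the paper. The paper never constructs an explicit map $\Phi$ nor checks its injectivity. Instead it rewrites the left-hand side via~\eqref{based2} as $[S^d,LY]^*_{\Z_2}/[S^1,(LY)^{\Z_2}]^*$ and then \emph{switches the roles of the two loop directions}: the free-loop coordinate of $LY$ becomes the new domain and the $d$ sphere coordinates are absorbed into the target, giving $[S^1,\Omega^dY]_{\Z_2}/[S^1,(LY)^{\Z_2}]^*$. The inclusion then appears tautologically as the subset of classes whose basepoint lands in $(\Omega^dY)^{\Z_2}_0$; a second use of~\eqref{based2} produces a double quotient, and Lemma~\ref{lemma1} (applied to both quotienting actions) collapses it to the single quotient $[S^1,\Omega^dY]^*_{\Z_2}/[S^1,Y^{\Z_2}]^*=[(S^{d+1},s_0),(Y,Y^{\Z_2}_0)]_{\Z_2}$.

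Your route trades this coordinate switch and tautological inclusion for the injectivity of $i_*\colon[S^d,\Omega Y]^*_{\Z_2}\to[S^d,LY]^*_{\Z_2}$. Your justification there is essentially right but slightly misworded: a section does \emph{not} in general trivialise the monodromy action of $\pi_1(\text{base})$ on $\pi_0(\text{fibre})$. What it does do is make $\Omega p_*$ surjective (via $\Omega q$), so that in the Puppe sequence
\[
[S^d,\Omega LY]^*_{\Z_2}\xrightarrow{\Omega p_*}[S^d,\Omega Y]^*_{\Z_2}\xrightarrow{\partial}[S^d,\Omega Y]^*_{\Z_2}\xrightarrow{i_*}[S^d,LY]^*_{\Z_2}
\]
the connecting map $\partial$ vanishes. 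For $d\ge1$ all terms are groups and all maps homomorphisms, whence $\ker i_*=\operatorname{im}\partial=0$; this is what you need, and $d\ge1$ is the only case used in the proof of Theorem~\ref{theorem1}. The paper's approach avoids articulating this step at the price of the less transparent coordinate swap; your approach makes the inclusion map concrete at the price of invoking the exact sequence.
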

\begin{proof}
\begin{align}
&[(S^d,s_0),(LY,(LY)^{\Z_2}_0)]_{\Z_2}\nonumber\\
&=[S^d,LY]^*_{\Z_2}/[S^1,(LY)^{\Z_2}]^*\label{unbasedbased}\\
&=[S^1,\Omega^dY]_{\Z_2}/[S^1,(LY)^{\Z_2}]^*\label{switch}\\
&\supset[(S^1,s_0),(\Omega^dY,(\Omega^d Y)^{\Z_2}_0)]_{\Z_2}/[S^1,(LY)^{\Z_2}]^*\label{inclusion}\\
&=\Big([S^1,\Omega^dY]^*_{\Z_2}/[S^1,(\Omega^d Y)^{\Z_2}]^*\Big)/[S^1,(LY)^{\Z_2}]^*\label{doublequotient}\\
&=[S^1,\Omega^dY]^*_{\Z_2}/[S^1,Y^{\Z_2}]^*\label{absorbedquotient}\\
&=[(S^{d+1},s_0),(Y,Y^{\Z_2}_0)]_{\Z_2}
\end{align}
This chain of equalities and inclusions needs some explanation: We first use the relation~\eqref{based2} between based and unbased homotopy classes to arrive at~\eqref{unbasedbased}. Then, for eq.~\eqref{switch}, the perspective is changed to viewing the (free) loop parameter of $LY$ as the domain and the $d$ coordinates of $S^d$ as the domain of elements in $\Omega^d Y$. Importantly, this effects a change from based homotopy classes to unbased ones. The inclusion \eqref{inclusion} is well defined on the quotient since $(\Omega^d Y)^{\Z_2}_0$ is fixed under conjugation by elements in $(LY)^{\Z_2}$. Having arrived at~\eqref{doublequotient} by again using~\eqref{based2}, we use lemma~\ref{lemma1} to homotope the action of elements in $[S^1,(\Omega^d Y)^{\Z_2}]^*$ as well as $[S^1,(LY)^{\Z_2}]^*$ to the action of some element in $[S^1,Y^{\Z_2}]^*$, yielding~\eqref{absorbedquotient}. In the last step, we use~\eqref{based2} again to complete the proof.
\end{proof}

We are now equipped to prove theorem \ref{theorem1}: For the case without defects, if $Y^{\Z_2}$ is connected,
\begin{align}
[T^d,Y]_{\Z_2}
&=[S^1,L^{d-1}Y]_{\Z_2}\nonumber\\
&\supset[(S^1,s_0),(L^{d-1}Y,(L^{d-1}Y)^{\Z_2}_0)]_{\Z_2}\nonumber\\
&\supset[(S^2,s_0),(L^{d-2}Y,(L^{d-2}Y)^{\Z_2}_0)]_{\Z_2}\nonumber\\
&\supset\cdots\nonumber\\
&\supset[(S^{d-1},s_0),(LY,(LY)^{\Z_2}_0)]_{\Z_2}\nonumber\\
&\supset[(S^d,s_0),(Y,Y^{\Z_2}_0)]_{\Z_2}\nonumber\\
&=[S^d,Y]_{\Z_2}
\end{align}
If $Y^{\Z_2}$ has several components $Y^{\Z_2}_n$, we repeat the above steps for different base points $y_0\in Y^{\Z_2}_n$ to obtain
\begin{align}
[T^d,Y]_{\Z_2}
&=\coprod_n [(T^d,s_0),(Y,Y^{\Z_2}_n)]_{\Z_2}\nonumber\\
&\supset\coprod_n [(S^d,s_0),(Y,Y^{\Z_2}_n)]_{\Z_2}\nonumber\\
&=[S^d,Y]_{\Z_2}\label{nodefect}
\end{align}
In the presence of defects, similar steps lead to the result of theorem \ref{theorem1}. Assuming again that $Y^{\Z_2}$ is connected,
\begin{align}
&[S^{d_x}\times T^{d_k},Y]_{\Z_2}\nonumber\\
&=[S^{d_x},L^{d_k}Y]_{\Z_2}\nonumber\\
&\supset[(S^{d_x},s_0),(L^{d_k}Y,(L^{d_k}Y)^{\Z_2}_0)]_{\Z_2}\nonumber\\
&\supset[(S^{d_x+1},s_0),(L^{d_k-1}Y,(L^{d_k-1}Y)^{\Z_2}_0)]_{\Z_2}\nonumber\\
&\supset[(S^{d_x+2},s_0),(L^{d_k-2}Y,(L^{d_k-2}Y)^{\Z_2}_0)]_{\Z_2}\nonumber\\
&\supset\cdots\nonumber\\
&\supset[(S^{d_x+d_k-1},s_0),(LY,(LY)^{\Z_2}_0)]_{\Z_2}\nonumber\\
&\supset[(S^{d_x+d_k},s_0),(Y,Y^{\Z_2}_0)]_{\Z_2}\nonumber\\
&=[S^{d_x+d_k},Y]_{\Z_2}
\end{align}
By the same argument as in~\eqref{nodefect}, the result generalizes to disconnected $Y^{\Z_2}$ by repeating the above for base points in all different components. This completes the proof of theorem \ref{theorem1}.

\section{Appendix C: Stacked skyrmions}

In this part of the appendix, we give the mathematical reasons for the following two aspects of the non-stable regime:
\begin{itemize}
\item Strong invariants may break down in the presence of weak ones (in the sense of definition (i))
\item Strong topological insulators (def. (i)) can sometimes be constructed by stacking lower-dimensional insulators
\end{itemize}
We will use the example introduced in eq.~\eqref{gsnematics} which exhibits both of the above features. Its topological phases are the set $[T^2,\text{Gr}_1(\R^3)]$, which will be determined in the following, leading to the result in eq.~\eqref{nematics}. We first outline a procedure to compute $[T^2,Y]$ in general and then specialize to $Y=\text{Gr}_1(\R^3)$. 

Denoting by $(LY)_n$ the $n$-th connected component of the free loop space $LY$, the set $[T^2,Y]$ is a disjoint union of subsets labelled $(n_1,n_2)$, which contain classes whose representatives restrict to $(LY)_{n_1}$ on $S^1\times\{s_0\}$ and to $(LY)_{n_2}$ on $\{s_0\}\times S^1$. Notice that the number of elements in a sector $(n_1,n_2)$ is the same as in $(n_2,n_1)$.

The number of elements in a subset $(n_1,n_2)$ can be determined by computing $[S^1,(LY)_{n_1}]$ and counting the elements that map to $(LY)_{n_2}$ under the map $p_*$ induced by the evaluation map. For our concrete example $Y=\text{Gr}_1(\R^3)$, the free loop space $LY$ has two connected components which we denote $(LY)_0$ (containing the constant map) and $(LY)_1$ (containing all non-trivial loops, which are freely homotopic to elements in the non-trivial class of $\pi_1(\text{Gr}_1(\R^3))=\Z_2$).

For our example, it will turn out that $\pi_1((LY)_1)$ is abelian and therefore (cf.~eq.~\eqref{based})
\begin{align}
[S^1,(LY)_1]\simeq\pi_1((LY)_1).
\end{align}
Choosing a base point in $(LY)_1$, the long exact sequence associated to the free loop fibration contains the right hand side of the above equation and reads
\setlength{\arraycolsep}{1pt}
\begin{align*}
\begin{array}{ccccccccc}
\pi_2(Y)&\stackrel{\partial_2}{\to}&
\pi_1((\Omega Y)_1)&\stackrel{i_*}{\to}&
\pi_1((LY)_1)&\stackrel{p_*}{\to}&
\pi_1(Y)&\stackrel{\partial_1}{\to}&
\pi_0((\Omega Y)_1)\\
\parallel&&\parallel&&&&\parallel&&\parallel\\
\Z&&\Z&&&&\Z_2&&0
\end{array}
\end{align*}
This exact sequence is \textit{not} split as the one with a base point in $(LY)_0$ in~\eqref{short}. This entails the fact that $[S^1,(LY)_1]\ne[S^2,Y]\times[S^1,Y]=\N\times\Z_2$, since the first map $\partial_2$ is \textit{not} the constant map as in the split case, but rather multiplication by $-2$~\cite{bechtluft}. Indeed, exactness implies that $\pi_1((LY)_1)$ must be a group with exactly four elements, leaving only the possibilities $\Z_2\times\Z_2$ or $\Z_4$. In either case, it is an abelian group as previously claimed and therefore $[S^1,(LY)_1]$ also contains only four elements (rather than infinitely many). This explains how strong invariants can break down.

The other point, that strong topological insulators as in definition (i) can be stacked, is explained by the fact that $\pi_1((LY)_1)=\Z_4$ rather than $\Z_2\times\Z_2$. If $\psi:S^1\to\text{Gr}_1(\R^3)$ is a non-trivial topological insulator in one dimension, i.e. represents the non-trivial class in $\pi_1(\text{Gr}_1(\R^3))=\Z_2$, then the generator of $\pi_1((LY)_1)=\Z_4$ is represented by $\psi(k_1+k_2)$, where $k_1$ is the coordinate associated to $\pi_1$ and $k_2$ is the free loop coordinate. Since the group structure in $\pi_1$ is concatenation of loops, the other elements in $\Z_4$ are represented respectively by
\begin{align}
\psi(m k_1+k_2),
\end{align}
with $m=0,1,2,3$. These configurations are illustrated in Figs.~\ref{01} ($m=0$), \ref{21} ($m=2$) and \ref{31} ($m=3$). The ones with even $m$ belong to the sector $(1,0)$, while the ones with odd $m$ belong to the sector $(1,1)$. All of these maps correspond to the one-dimensional non-trivial insulator stacked along the $(-1,m)$-direction of the two-dimensional lattice $\Z^2$.

The above implies that the sectors $(1,1)$, $(1,0)$ and therefore also the sector $(0,1)$ contain two elements, all of which can be realized by stacking. Together with the result of appendix B, which states that the sector $(0,0)$ is in bijection with $[S^2,\text{Gr}_1(\R^3)]=\N$, the result~\eqref{nematics} follows. Of the sector $(0,0)$ only the constant map can be realized by stacking, giving a total of \textit{seven} (rather than the naively expected \textit{four}) stacked topological insulators.

Notice that this example additionally demonstrates that there can be no compromise between definitions (i) and (ii) of strong topological insulators. Indeed, the only topological phases which cannot be realized by stacking are the non-trivial elements in $\N=[S^2,\text{Gr}_1(\R^3)]\subset[T^2,\text{Gr}_1(\R^3)]$.

\end{document}